\def\linfcv{\buildrel \mathcal L^\infty \over \rightarrow}
\def\d{\mathrm{d}}
\DeclareMathOperator*{\argmin}{argmin}
\newcommand{\E}{\mathbb{E}}
\newcommand{\R}{\mathbb{R}}
\newcommand{\N}{\mathbb{N}}
\newcommand{\p}{\mathbb{P}}
\newcommand{\one}{\mathbf{1}}
\newcommand{\esssup}{\mathrm{ess\mbox{-}sup}}
\newcommand{\essinf}{\mathrm{ess\mbox{-}inf}}
\renewcommand{\leq}{\leqslant}
\renewcommand{\epsilon}{\varepsilon}
\theoremstyle{plain}
\newtheorem{theorem}{Theorem}
\newtheorem{proposition}[theorem]{Proposition}
\theoremstyle{definition}
\newtheorem{definition}{Definition}[section]
\newtheorem{example}{Example}[section]
\theoremstyle{remark}
\newtheorem{remark}{Remark}[section]
\theoremstyle{definition}
\numberwithin{equation}{section} \numberwithin{theorem}{section}
\renewcommand{\cite}{\citet}
\renewcommand{\cdots}{\dots}
\renewcommand{\cite}{\citet}
\titlespacing{\section}{0pt}{8pt}{4pt}
\titlespacing{\subsection}{0pt}{6pt}{4pt}
\titlespacing{\subsubsection}{0pt}{5pt}{3pt}
\renewcommand{\topmargin}{-1.6cm}
\begin{document}

\captionsetup[figure]{labelformat=simple, labelsep=period}
\captionsetup[table]{labelformat=simple, labelsep=period}

\title{Conditional generalized quantiles based on expected utility model and equivalent characterization of properties}

\author{ Qinyu Wu\thanks{wu05551@mail.ustc.edu.cn; Department of Statistics and Finance, School of Management,
University of Science and Technology of China,
Hefei, Anhui, China}
\quad
Fan Yang\thanks{fan.yang@uwaterloo.ca; Department of Statistics and Actuarial Science, University of Waterloo, Canada.}
\quad
Ping Zhang\thanks{zp012@mail.ustc.edu.cn; Department of Statistics and Finance, School of Management,
University of Science and Technology of China,
Hefei, Anhui, China}
}
\date{}
\maketitle{}

\abstract{
As a counterpart to the (static) risk measures of generalized quantiles and motivated by \cite{BB18}, we propose a new kind of conditional risk measure called conditional generalized quantiles.
%from the perspective of dynamic risk measures.
We first  show  their well-definedness  and they can
be equivalently characterised by a conditional first order condition. We also  discuss their  main properties, and, especially, We give  the characterization of coherency/convexity.
%$\mathrm{CoVaR}$ has been explored a lot in many works, such as \cite{AT16} and conditional expectile has been studied by \cite{BB18}.
For potential
applications  as a dynamic risk measure, we study their time consistency properties, and establish their equivalent characterizations among conditional generalized quantiles.
}
%We also study the application of generalized quantiles to measuring systemic risk contribution of a financial institution, as we want to account for decision makers' risk aversion in measuring systemic risk, which has not been the focus of much study to date. A theoretical result of the generalized quantile for systemic risks is given and we find that it provides a new perspective on systemic risk contribution. }

\maketitle
{\it Keywords}: Conditional generalized quantile; conditional shortfall risk measure; dynamic risk measure

\section{Introduction}

Risk measures serve as quantitative tools to determine minimum capital reserves and have attracted growing interest since the seminal work of \cite{ADEH99}. %\footnote{Risk measures serving as a tool for capital calculation was the original motivation in \cite{ADEH99}. Although regulatory capital
%is the primary interpretation of risk measures in this paper, the results are not limited to such an interpretation.}  It is interpreted as the amount of capital required from the institution
%to make the risky position in a fixed period acceptable (see e.g., \cite{ADEH99}), a crucial consideration
%from the perspectives of both internal risk management and industrial regulation in banking.
Various (static) risk measures have been proposed, such as value-at-risk (VaR, also known as quantiles), which is the regulatory risk measure used in the Basel requirements. %%EDITOR'S NOTE: Please ensure that the intended meaning has been maintained in this edit.
%For a comprehensive presentation of the theory of static
%risk measures, we refer to \cite{D00} and \cite{FS16}.
Generalized quantiles, as a nonlinear generalization of quantiles, were introduced by \cite{BB14} as a well-defined class of risk measures that include VaR and expectiles (\citealp{NP87}) as special cases. It is also well known that such measures are closely related to the Haezendonck-Goovaerts risk measures (\citealp{BR12}), $\psi$-mean certainty equivalent (\citealp{BT07}) and shortfall risk measures (\citealp{FS16}). %It has been shown that Expectiles (with level no less than $1/2$) are the only class of coherent risk measures in the class of generalized quantiles.  %The conditional expectile was studied by \cite{BB18}.
Recall that the generalized quantile of a risk $X$ is defined as the minimizer of the minimization problem $\min \limits_{x\in\R}\pi _\alpha (X,x)$ (\citealp{BB14}), where
\begin{equation}\label{eq-pi}
\pi _\alpha (X,x) = \alpha \E[ u_1((X-x)^+)] +(1-\alpha)\E [u_2((X-x)^-)].
\end{equation}
Here, $a^+=\max\{a,0\}$, $a^-=\max\{-a,0\}$, $\alpha\in (0,1)$ is the confidence level to balance the shortfall risk $(X-x)^+$ and the over-required capital risk $(X-x)^-$, and $u_1$ and $u_2$ are two loss (disutility) functions in the expected utility model that are assumed to be strictly increasing convex functions with $u_1(0)=u_2(0)=0$ and $u_1(1)=u_2(1)=1$, and are used to
transform the two risks $(X-x)^+$ and  $(X-x)^-$ respectively. If $u_1(x)=u_2(x)=x$, then the generalized quantile reduces to the classic quantile, and if $u_1(x)=u_2(x)=x^2$, the generalized quantile reduces to the expectile.

Generalized quantiles have  wide applications in quantitative risk management and statistics, see \cite{G11}, \cite{EGJK16}, \cite{CS17} and references therein. It has been argued that the generalized quantile has specific merits in contexts where a decision maker (regulator) is  more concerned about upper-tail realizations of a loss random variable, and views the two risks $(X-x)^+$ and  $(X-x)^-$ asymmetrically with
underestimating losses being more detrimental than overestimating (\citealp{RRM14} and \citealp{MC18}).
%There are two folds in the asymmetry of the setup of the problem. One is related to the concept of loss aversion. In such case, the decision maker has different attitudes towards losses and gains. In the model of utility, the decision maker uses two different utility functions ($u_1$ and $u_2$) to the risk below or above a reference point.  The other is about what kind of severe loss that the decision maker would like to defend against, such as the worst loss that happens once out of every 20 ($1/(1-\alpha)$ with $\alpha=0.95$) times, reflected by the confidence level $\alpha\in (0,1)$.

We illustrate generalized quantiles in solvency risk management. For  a random loss $X$ faced by a financial institution, let $x$ be the capital reservation which will be determined by a regulator. If $X>x$, then $(X-x)^+=\max\{X-x,0\}$ represents
the ``shortfall risk", while if $x > X$, then $(X-x)^-=\max\{x-X, 0\}$ describes ``the over-required capital
risk'' under the required solvency capital decision $x$. Obviously, there is a trade-off between the shortfall
risk and the over-required capital risk.  It is generally believed that the shortfall risk $(X-x)^+$ will be transferred to society if defaulted, while the over-required capital risk $(X-x)^-$ represents a waste of capital/resource. The loss transferred to society may induce severe consequences (e.g., financial crisis) if the loss is large. From the perspective of the regulation,
  it is reasonable to assume that  the regulator uses different loss functions $u_1$ and $u_2$ to transform or quantitate $(X-x)^+$ and $(X-x)^-$ respectively.
In the meantime,  as a compromise or to take the overall risk of the decision
on a required solvency capital into consideration, the regulator may want to consider the weighted risk and would like to minimize the expectation of the weighted risk  \eqref{eq-pi}, where $\alpha$ and $1-\alpha$  can be viewed as the sensitive measures of the regulator to
the expectations of the shortfall risk and the over-required capital risk, respectively. %The value of $\alpha$ is often chosen to be large as the capital reservation should be large enough to protect the system (not default) against the worst loss that happens once out of every $1/(1-\alpha)$  times.\footnote{The frequency is only exactly right for the case of Value-at-Risk, that is, $u_1(x)=u_2(x)=x$.}

In reality, most decision makers are making  decisions dynamically over time, usually at discrete times (e.g., once a day or once a week). Dynamic risk assessments are essential in the framework of multi-period problem. Hence, a natural extension of a static risk measure (which can be viewed as a risk assessment in a one-period problem) is given by a conditional risk measure, which takes account of the information available at the time of the risk assessment and is essential in dynamic risk measures (\citealp{AP11}; \citealp{FG04}; \citealp{RSE05}). Specifically,  the dynamical setting is described by some
filtered probability space and the risk assessment is updated over the time in accordance with the new information. That is,  similar to the concept of conditional expectation, the risk measure takes value in a space of random variables which is measurable with respect to some $\sigma$-field.  Conditional risk measures for classic risk measures have been widely studied.  Among others,
\cite{RS07} investigated the conditional tail VaR, and \cite{BB18} studied the conditional expectile.
%as a dynamic risk measure.
In this paper, we introduce the notion of \emph{conditional generalized quantile}
%To give its  definition
(given in Section \ref{Conditional generalized quantile}). The properties of conditional generalized quantile as a dynamic risk is the first aim of the paper.
%We study its as a dynamic risk measure which is the first aim of the paper.
Specifically, we are interested in the minimization problem
\begin{align}\label{main-opt}
	 \min \limits_{Z\in \mathcal{L} ^{\infty}(\Omega,\mathcal{G},\p)} \pi _\alpha (X,Z),
	\end{align}
	where $\pi _\alpha (X,x)$ is defined by \eqref{eq-pi}, $\mathcal{G}$ is a sub-$\sigma$-algebra of $\mathcal{F}$ on $(\Omega,\mathcal{F})$ and $\mathcal{L} ^{\infty}(\Omega,\mathcal{G},\p)$ is the set of all bounded and measurable random variables on $(\Omega,\mathcal{G},\p)$.
%{\color{red}$\mathcal{L} ^{\infty}(\mathcal{G})$ is ?????.}
We define  the minimizer in $\mathcal{L} ^{\infty}(\Omega,\mathcal{G},\p)$ as the conditional generalized quantile and study its properties as a dynamic risk measure. The contributions of this paper are summarized below.
\begin{itemize}
\item[(i)] It is nontrivial whether the minimizer of \eqref{main-opt} exists. To guarantee that the definition of conditional generalized quantile is reasonable (existence and uniqueness), we first prove that the set of minimizers is not empty, and then take the essential infimum of the set which is again a minimizer as the conditional generalized quantile (Theorem \ref{lemma-def}).
\item[(ii)] We introduce the concept of conditional shortfall risk measure which is induced by the classical shortfall risk measure (\cite{FS16}). The equivalence between the conditional generalized quantiles and conditional shortfall risk measures is established (Theorem $\ref{th-eq}$). Based on this, we discuss desired properties such as monotonicity, conditional translation invariance, conditional convexity and conditional positive homogeneity (Proposition $\ref{pro-basic}$). Moreover, we describes the necessary and sufficient condition for conditional generalized quantiles being a conditional convex or coherent risk measure (Theorem $\ref{th-cxcoherency}$).
\item[(iii)] For possible applications as dynamic risk measures, we discuss some kinds of time consistency properties that are sequential consistency, dynamic consistency and supermartingale property. We conclude that dynamic generalized quantiles is sequentially consistent, and dynamic entropic risk measure, which is a special case of dynamic generalized quantiles, is the unique class that is dynamic consistent or has supermartingale property (Theorem \ref{th-timeconsistency}).
\end{itemize}

The rest of the paper is organized as follows. In Section \ref{Preliminary}, we recall the definition and some desired properties of conditional risk measures.
 %as dynamic risk measures and then present lemmas that will be used in the proof of the main results.
In Section \ref{Conditional generalized quantile}, the formal definition of the conditional generalized quantile
%, which is guaranteed by Theorem $\ref{lemma-def}$,
is presented.
We also show that conditional generalized quantile can
be equivalently characterized by a conditional first order condition (Proposition \ref{pro-minimizer}).
Section \ref{Conditional shortfall risk measures}
provides the definition of the conditional shortfall risk measure and establishes the equivalence between the conditional generalized quantiles and conditional shortfall risk measures. In Section \ref{sec:characterization}, we study the properties of conditional generalized quantiles and the characterization of conditional convexity or coherency.
%. Specifically, we discuss desired properties such as monotonicity, conditional translation invariance, conditional convexity and conditional positive homogeneity (Proposition $\ref{pro-basic}$). Theorem $\ref{th-cxcoherency}$ describes the necessary and sufficient condition for conditional generalized quantiles being a conditional convex or coherent risk measure.
Section \ref{Properties} considers  conditional generalized quantiles in a dynamic framework and some further properties are investigated.

%Theorem $\ref{pro-further}$ collects properties of conditional generalized quantiles as dynamic risk measures.
%In Section \ref{systemic risk}, we study the application of generalized quantiles is measuring systemic risks. Specifically, we study the asymptotic properties of $\rho(X|Y>{\rm VaR}_\tau(Y))$ as   $\tau\uparrow1$, where $\rho$ is a generalized quantile.

\emph{Throughout the paper},
$\mathcal U_{icx}^{+}$ is the set of all strictly increasing and convex functions  $u$ on $\R_{+}$ with $u(0) = 0$ and $u(1)=1$. %$\mathcal{V}$ is the set of all strictly increasing functions on $\R$.
We confine on the set of bounded real valued random variables defined on a non-atomic probability space $(\Omega,\mathcal{F},\p)$, and we adopt the convention %notation
that a positive (negative) value of $X$ means a financial loss (profit). All equalities, inequalities or convergence concerning random variables are under the sense of $\p \mbox{-}$a.s.. For simplicity, we denote by $\mathcal{L} ^{\infty}(\mathcal{F})=\mathcal{L} ^{\infty}(\Omega,\mathcal{F},\p)$ as the set of all bounded random variables that are measurable with respect to $\mathcal{F}$.
%The notation ``$\sim$'' means asymptotic equivalence, that is, for functions $g$ and $h$, $g(x)\sim h(x)$ as $x \rightarrow x_0$ means that $\lim\limits_{ x\rightarrow x_0} g(x)/h(x) = 1$. ``$\tau\uparrow 1$'' means that $\tau$ is increasing extremely close to 1.

\section{Preliminary}\label{Preliminary}
%\subsection{Conditional risk measures}

In this section, we introduce the concept of conditional risk measures, which is not new, and list some desired properties. The notion of a dynamic risk measure as a collection of conditional risk measures was introduced by \cite{W99} and further studied via different formal approaches by \cite{A07}, \cite{R04}, \cite{Weber06}, \cite{RSE05}, \cite{S03}, \cite{FG04} and \cite{CDK06}. Most of these papers considered dynamic risk measures for discrete-time processes, but we prefer the work in \cite{DS05} instead simply for its use of random variables, as we use here.

Conditional risk measures interpret the situation when additional information $\mathcal{G}$ is provided for the assessment of the risk of a payoff $X$. In this situation, the risk measurement of $X$ leads to a random variable $\rho^\mathcal{G}(X)$. It is natural to consider $\rho^\mathcal{G}(X)(\omega)$ as the risk of $X$ when the state $\omega\in\Omega$ is available. First, we employ the definition of a conditional risk measure as in \cite{DS05}.
%\begin{definition}
A \emph{conditional risk measure} is a functional $\rho^\mathcal{G}$ from $\mathcal{L}^\infty(\mathcal{F})$ to $\mathcal{L}^\infty(\mathcal{G})$\footnote{The original definition of a conditional risk measure is $\rho^\mathcal{G}:\mathcal{L}^\infty(\mathcal{F})\to \mathcal{L}^0(\mathcal{G})$, while we write $\rho^\mathcal{G}:\mathcal{L}^\infty(\mathcal{F})\to \mathcal{L}^\infty(\mathcal{G})$. This is because
all considered conditional risk measures in this paper are monotonic, conditional translation-invariant and normalized, and these properties guarantee that $\rho^\mathcal{G}(X)$ is bounded for all $X\in\mathcal{L}^\infty(\mathcal{F})$.}.
As a principle for regulation, we list some desired properties of conditional risk measures as follows.
\begin{enumerate}
\item[(P1)] {Monotonicity}: $\rho^\mathcal{G}(X)\leq \rho^\mathcal{G}(Y) $ for all $X, \, Y \in \mathcal L^\infty(\mathcal{F})$ such that $X\le Y$.
\item[(P2)] {Conditional translation-invariance}: $\rho^\mathcal{G}(X+Z)= \rho^\mathcal{G}(X)+ Z.$ for all $X \in \mathcal L^\infty(\mathcal{F})$ and $Z\in \mathcal L^\infty(\mathcal{G})$.
\item [(P3)] {Conditional convexity}: $\rho^\mathcal{G}(\Lambda X+ (1-\Lambda)Y)\le \Lambda\rho^\mathcal{G}(X)+ (1-\Lambda)\rho^\mathcal{G}(X) $ for all $X, \, Y \in \mathcal L^\infty(\mathcal{F})$ and $\Lambda\in \mathcal L^\infty(\mathcal{G})$ with $0 \le \Lambda \le1$.
\item[(P4)] {Normalization}: $\rho^\mathcal{G}(0)= 0$.
\item[(P5)] {Conditional positive homogeneity}: $\rho^\mathcal{G}(\Lambda X)= \Lambda\rho^\mathcal{G}(X)$ for all $X \in  L^\infty(\mathcal{F}) $ and $\Lambda\in\mathcal L^\infty(\mathcal{G})$ with $\Lambda \ge 0$.
\end{enumerate}

\begin{definition}\label{def2-1}
A conditional risk measure $\rho^\mathcal{G}:\mathcal{L}^\infty(\mathcal{F})\to \mathcal{L}^\infty(\mathcal{G})$ is said to be a {\em conditional convex risk measure} if it satisfies (P1), (P2), (P3) and (P4).
\end{definition}

\begin{definition}\label{def2-2}
A conditional risk measure is said to be a \emph{conditional coherent risk measure} if it is a conditional convex risk measure and satisfies (P5).
\end{definition}

\section{Conditional generalized quantiles }\label{Conditional generalized quantile}

In this section, we define the \emph{conditional generalized quantile} as the minimizer of the optimization problem \eqref{main-opt}.
Before presenting the formal definition of the conditional generalized quantile, we first show below that the definition is well {posed}%defined
, that is, for any $X\in\mathcal L^{\infty}(\mathcal F)$, the minimization problem \eqref{main-opt}
admits a minimizer in $\mathcal L^{\infty}(\mathcal G)$. %To be specific, the
Denote by \begin{align}\label{phi}\phi_\alpha(X,Z):=\alpha u_1((X-Z)^+)+(1-\alpha)u_2((X-Z)^-),\end{align} and recall that the minimization problem \eqref{main-opt} is
\begin{align*}
	\min \limits_{Z\in \mathcal{L} ^{\infty}(\mathcal{G})} \pi _\alpha (X,Z)=\min \limits_{Z\in \mathcal{L} ^{\infty}(\mathcal{G})}\E [\phi_\alpha(X,Z)].
 \end{align*}
\begin{theorem}\label{lemma-def}
Let $X \in \mathcal{L} ^{\infty}(\mathcal{F})$, $\mathcal G \subseteq  \mathcal{F}$ be a $\sigma$-algebra, $u_1, u_2 \in \mathcal U_{icx}^{+}$, and $\alpha \in (0,1)$. The following statements hold.
\begin{enumerate}
\item[(i)] The set of  minimizers of the problem  \eqref{main-opt} is not empty, that is, $\argmin \limits_{Z\in \mathcal{L} ^{\infty}(\mathcal{G})} \pi _\alpha (X,Z)\neq \emptyset$.

\item[(ii)] If $Z_1,Z_2\in\argmin \limits_{Z\in \mathcal{L} ^{\infty}(\mathcal{G})} \pi _\alpha (X,Z)$, then
$$
\E[\phi _\alpha (X,Z_1)|\mathcal G]=\E[\phi _\alpha (X,Z_2)|\mathcal G].
$$
\item[(iii)] $\essinf\argmin \limits_{Z\in \mathcal{L} ^{\infty}(\mathcal{G})} \pi _\alpha (X,Z)\in \argmin \limits_{Z\in \mathcal{L} ^{\infty}(\mathcal{G})} \pi _\alpha (X,Z)$.

%The problem $\essinf\argmin \limits_{Z\in \mathcal{L} ^{\infty}(\mathcal{G})} \pi _\alpha (X,Z)$ admits a solution in $\mathcal{L} ^{\infty}(\mathcal{G})$.
\end{enumerate}
\end{theorem}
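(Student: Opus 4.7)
The plan rests on two pointwise features of the integrand $\phi_\alpha(X,z)$ viewed as a function of $z$: it is convex, and outside the interval $[-M,M]$ with $M:=\|X\|_\infty$ it is strictly coercive. Indeed, for $\omega$ with $z>M\ge X(\omega)$ one has $(X-z)^-=z-X(\omega)>M-X(\omega)=(X-M)^-$, so strict monotonicity of $u_2$ gives $\phi_\alpha(X,z)>\phi_\alpha(X,M)$ there; the symmetric argument applies for $z<-M$. Consequently any minimizer $Z$ of \eqref{main-opt} automatically satisfies $|Z|\le M$ a.s., and any minimizing sequence may be truncated into $[-M,M]$ without increasing $\pi_\alpha$. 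Write $m:=\inf_{Z\in\mathcal L^\infty(\mathcal G)}\pi_\alpha(X,Z)$ for the target value.

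For (i), take a minimizing sequence $(Z_n)\subset\mathcal L^\infty(\mathcal G)$ with $\|Z_n\|_\infty\le M$. Since $(Z_n)$ is bounded in $\mathcal L^1$, Koml\'os' theorem yields a subsequence whose Ces\`aro averages $\bar Z_n$ converge a.s.\ to some $\mathcal G$-measurable $Z^\ast$ with $|Z^\ast|\le M$. Convexity of $\pi_\alpha(X,\cdot)$ keeps the averages minimizing, i.e.\ $\pi_\alpha(X,\bar Z_n)\to m$, and dominated convergence (using the uniform bound $\phi_\alpha(X,\bar Z_n)\le \alpha u_1(2M)+(1-\alpha)u_2(2M)$ together with continuity of $u_1,u_2$) gives $\pi_\alpha(X,Z^\ast)=m$.

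For (ii), given minimizers $Z_1,Z_2$, set $\mu_i:=\E[\phi_\alpha(X,Z_i)\mid\mathcal G]$ and $A:=\{\mu_1<\mu_2\}\in\mathcal G$. The glued candidate $\tilde Z:=\one_A Z_1+\one_{A^c}Z_2\in\mathcal L^\infty(\mathcal G)$ satisfies $\phi_\alpha(X,\tilde Z)=\one_A\phi_\alpha(X,Z_1)+\one_{A^c}\phi_\alpha(X,Z_2)$, so
$$\pi_\alpha(X,\tilde Z)=\E[\one_A\mu_1+\one_{A^c}\mu_2]=\E[\mu_2]-\E[\one_A(\mu_2-\mu_1)].$$
If $\p(A)>0$ this would be strictly less than $\E[\mu_2]=\pi_\alpha(X,Z_2)=m$, a contradiction. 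Symmetry then yields $\mu_1=\mu_2$ a.s.

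For (iii), the key is a lattice property: for $Z_1,Z_2\in\argmin$, the analogous gluing with $A=\{Z_1\le Z_2\}$ combined with (ii) gives $\pi_\alpha(X,Z_1\wedge Z_2)=\E[\mu_1]=m$, so $Z_1\wedge Z_2\in\argmin$. Thus $\argmin$ is downward directed, and the standard construction of the essential infimum produces a sequence $Z_n\in\argmin$ with $Z_n\downarrow Z^\ast:=\essinf\argmin$ a.s., where $Z^\ast\in\mathcal L^\infty(\mathcal G)$ because $|Z_n|\le M$. A final dominated-convergence step delivers $\pi_\alpha(X,Z^\ast)=\lim\pi_\alpha(X,Z_n)=m$, proving $Z^\ast\in\argmin$. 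I expect the main delicacy to be the compactness step in (i), namely ensuring the limit is both $\mathcal G$-measurable and in $\mathcal L^\infty(\mathcal G)$, which the truncation-plus-Koml\'os combination handles cleanly.
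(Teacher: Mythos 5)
Your proposal is correct. Parts (ii) and (iii) coincide with the paper's argument: the same gluing construction $\one_A Z_1+\one_{A^c}Z_2$ along a $\mathcal G$-measurable set yields both the uniqueness of the conditional value and the downward-directedness of the argmin, after which the essential infimum is reached as a decreasing a.s.\ limit. The genuine difference is in part (i). The paper, after the same truncation to the ball $\|Z\|_\infty\le M$, works in the weak* topology: Banach--Alaoglu gives compactness of the ball, convexity plus $\mathcal L^1$-closedness of sublevel sets (via Lemma A.65 of F\"ollmer--Schied) gives weak* lower semicontinuity, and a Weierstrass-type existence theorem finishes. You instead extract an a.s.\ convergent sequence directly via Koml\'os' theorem applied to a minimizing sequence, use convexity to keep the Ces\`aro averages minimizing, and close with dominated convergence. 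Your route is more elementary and self-contained (no functional-analytic closedness lemma, and $\mathcal G$-measurability and boundedness of the limit are immediate from a.s.\ convergence of truncated $\mathcal G$-measurable functions), at the cost of invoking Koml\'os; the paper's route is the one that generalizes most directly to settings where only weak* compactness is available. One cosmetic remark: in (iii) you should make explicit that the limit $\pi_\alpha(X,Z^\ast)=\lim_n\pi_\alpha(X,Z_n)$ uses the same pointwise continuity of $z\mapsto\phi_\alpha(x,z)$ and the uniform bound $\phi_\alpha(X,Z_n)\le \alpha u_1(2M)+(1-\alpha)u_2(2M)$ as in (i); with that spelled out the argument is complete.
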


\begin{proof}
(i) We employ the following steps to show the result.
 \begin{enumerate} [1.]
 \item We first show that for any given $X$, there exists $m>0$ such that
\begin{align} \label{eq-39-1}
	\argmin \limits_{Z\in \mathcal{L}^{\infty}(\mathcal{G})} \pi _\alpha (X,Z) =
	\argmin \limits_{Z\in \mathcal{L}^{\infty}_m(\mathcal{G})} \pi _\alpha (X,Z),
 \end{align}
where $\mathcal{L}^{\infty}_m(\mathcal{G})=\{Z\in \mathcal{L}^{\infty}(\mathcal{G}): \|Z\|_\infty\le m\}$. To see \eqref{eq-39-1}, note that there exists $m\in (0,\infty)$ such that $\lVert X\rVert _\infty \le m$.
Let $ Z\in \mathcal{L}^\infty(\mathcal G)$ be such that $\lVert Z\rVert _\infty >m$, which implies that $\p(A)>0$ with $A=\{|Z|>m \}\in \mathcal{G}$. Define $Z^*$   as
$$Z^*= -m\one_{\{Z<-m\}} + Z\one_{A^c} + m\one_{\{Z>m\}},
$$
where $\one_A$ represents the indicator function of $A$.
One can verify that $Z^*\in \mathcal G$ and $\lVert Z^*\rVert _\infty \le m $. Note further that
\begin{align*}
	\pi _\alpha (X,Z^*)- \pi _\alpha (X,Z) & = \alpha \E u_1((X-Z^*)^+)+ (1-\alpha)\E u_2((X-Z^*)^-)\\
	 &
	~~~ -  \alpha \E u_1((X-Z)^+)- (1-\alpha)\E u_2((X-Z)^-) \\
	& = (1-\alpha)\E \{[u_2((X-m)^-)- u_2[((X-Z)^-)]\one_{\{Z>m \}}\}\\
	&~~~ + \alpha \E \{[u_1((X+m)^+)\one_{\{Z<-m \}}
	- u_1((X-Z)^+)]\one_{\{Z<-m \}}\}\\
	&< 0,
\end{align*}
where the last inequality holds as $u_1,u_2$ are strictly increasing and $\p(A)>0$. Hence, we have that \eqref{eq-39-1} holds.

\item  By \eqref{eq-39-1} and $X\in \mathcal L^\infty$,  we have that   $Z\mapsto\pi _\alpha (X,Z)$ satisfies the lower semicontinuity.
%has the Fatou property
That is, for any bounded sequence $\{Z_n,n\in\N\}$ which converges to some $Z~\p \mbox{-}a.s.$ as $n\to\infty$,  we have $\pi _\alpha (X,Z)\le\liminf\limits_{n\to\infty}\pi _\alpha (X,Z_n)$.

\item We next show that the conditions in Theorem 7.3.1 of \cite{KZ05}
are all satisfied if  $\mathcal{L} ^{\infty}_m(\mathcal{G})$ is endowed with the weak* topology.
\begin{itemize}
\item [(a)]	Weak* compactness. By Banach-Alaoglu's Theorem, we have $\mathcal{L}^{\infty}_m(\mathcal{G})$ is weak* compact.

\item [(b)] Weak* lower semicontinuity. To prove the weak* lower semicontinuity of $\pi _\alpha (X,Z)$ with respect to $Z\in \mathcal L_m^\infty(\mathcal G)$, it suffices to show that $\mathcal{B}:=\{Z\in \mathcal{L}^{\infty}_m(\mathcal{G}):\pi _\alpha (X,Z)\le c\}$ is weak* closed for $c\in \mathbb{R}$. First noting that $\pi_\alpha(x,z):= \alpha u_1((x-z)^+)+(1-\alpha) u_2((x-z)^-)$ is convex in $z$ for each $x\in \R $, we have that $\pi _\alpha (X,Z)$ is convex in $Z$, and thus, $\mathcal B$ is convex.  Hence, by Lemma A.65 in \cite{FS16}, it suffices to show $\mathcal B_r$  is a closed set in $\mathcal L^1$, where
$$\mathcal B_r:=\mathcal B\cap \{Z\in \mathcal{L}^{\infty}(\mathcal{G}): \|Z\|_\infty\le r\},~~r>0.$$

Let $\{Z_n\}$ be a sequence in $\mathcal{B}_r$ converging to some random variable $Z$ in $\mathcal{L}^1$, then by Skorohod Theorem, there exists a subsequence $\{Z_{nk}\}$  converges to $Z$ $\p \mbox{-}a.s.$  By the lower semicontinuity of $Z\mapsto\pi _\alpha (X,Z)$  with respect to $\p \mbox{-}a.s.$ convergence, we have that $\pi _\alpha (X,Z)\le\liminf\limits_{n\to\infty}\pi _\alpha (X,Z_n)$. Obviously, $\|Z\|_\infty\le \min\{r,m\}$ as $Z_n\in \mathcal{L}^{\infty}_m(\mathcal{G})$ and $\|Z_n\|_\infty\le \min\{r,m\}$ for all $ n\in\N.$ Hence, we have $Z\in \mathcal B_r$, that is, $\mathcal B_r$ is a closed set in $\mathcal L^1$. Therefore, we  have  $\mathcal B$ is weak* closed, and thus, $Z\mapsto\pi _\alpha (X,Z)$ is weak* lower semicontinuous.
\end{itemize}
Combining (a) and (b), we have the conditions in Theorem 7.3.1 of \cite{KZ05}
are all satisfied, and thus, $\argmin \limits_{Z\in \mathcal{L} ^{\infty}(\mathcal{G})} \pi _\alpha (X,Z)$ is not empty. This completes the proof of (i).
\end{enumerate}
\iffalse
Convexity.
$\pi_\alpha(x,z):= \alpha u_1((x-z)^+)+(1-\alpha) u_2((x-z)^-)$ is convex in $z$ for each $x\in \R $. It then follows immediately that $\pi _\alpha (X,Z)$ is convex in $Z$.

\item [(b)] Continuity.
If $Z_n\linfcv Z$, which implies $(X-Z_n)^+ \linfcv (X-Z)^+$ and $(X-Z_n)^- \linfcv (X-Z)^-$, then by the continuity of $u_1$ and $u_2$, we have $u_1\left((X-Z_n)^+\right) \linfcv u_1\left((X-Z)^+\right)$ and $u_2\left((X-Z_n)^-\right) \linfcv u_2\left((X-Z)^-\right)$. Hence, we have $\pi_\alpha(X,Z_n)\linfcv \pi_\alpha(X,Z)$.

\item [b]

Then, by Alaoglu's Theorem and Theorem 2.4.4 of \cite{KZ05}, we establish compactness in the weak* topology on $\mathcal{L} ^{\infty}(\mathcal{G})$. With all conditions hold in the weak* topology on $\mathcal{L} ^{\infty}(\mathcal{G})$, we can remove the reflexive condition from Theorem 7.3.5 of \cite{KZ05}. And then statement (i) follows.
\fi

(ii)
We prove this statement by contradiction. Let $Z_1,Z_2\in\argmin \limits_{Z\in \mathcal{L} ^{\infty}(\mathcal{G})} \pi _\alpha (X,Z)$, and define
$$
Y_1=\E[\phi_\alpha(X,Z_1)|\mathcal G]~~{\rm and}~~ Y_2=\E[\phi_\alpha(X,Z_2)|\mathcal G].
$$
We assume now that $\p(Y_1> Y_2)>0$. Let $Z=Z_1\one_{\{Y_1\le Y_2\}}+Z_2\one_{\{Y_1> Y_2\}}$. Note that $Z_1,Z_2,Y_1,Y_2$ all belong to $\mathcal{L} ^{\infty}(\mathcal{G})$, and thus $Z\in \mathcal{L} ^{\infty}(\mathcal{G})$. On the other hand,
\begin{align*}
\pi_\alpha(X,Z)&=\E[\E[\phi_\alpha(X,Z)|\mathcal G]]\\
&=\E[\E[\phi_{\alpha}(X,Z_1)\one_{\{Y_1\le Y_2\}}|\mathcal G]+\E[\phi_{\alpha}(X,Z_2)\one_{\{Y_1> Y_2\}}|\mathcal G]]\\
&=\E[\one_{\{Y_1\le Y_2\}}\E[\phi_{\alpha}(X,Z_1)|\mathcal G]+\one_{\{Y_1> Y_2\}}\E[\phi_{\alpha}(X,Z_2)|\mathcal G]]\\
&=\E[Y_1\one_{\{Y_1\le Y_2\}}+Y_2\one_{\{Y_1> Y_2\}}]\\
&=\E[Y_1\land Y_2] <\E[Y_1]=\pi_\alpha(X,Z_1),
\end{align*}
yielding a contradiction. Similarly, if $\p(Y_1<Y_2)>0$, one can obtain $\pi_\alpha(X,Z)<\pi_\alpha(X,Z_1)$. Hence, we verify the desired result.

(iii) Finally, we show that the essential infimum of the set of minimizers belongs to the set.
Define $\mathcal{B}:= \argmin_{Z\in \mathcal{L} ^{\infty}(\mathcal{G})} \pi _\alpha (X,Z)$ and $Z^*:= \essinf ~\mathcal{B}$. Applying Theorem A.33 in \cite{FS16} and Lemma 3 in the Appendix of \cite{DS05}, we prove $Z^* \in \mathcal{B}$ by equivalently showing that $\mathcal{B}$ is directed downward (named here for convenience), which is the opposite of the definition of directed upward.

For any $Z_1, Z_2 \in \mathcal{B}$,  %$\Lambda \in \mathcal{L}^\infty(\mathcal{G}),\Lambda \in (0,1)$,
let $A:=\left\{Z_1 \le Z_2\right\}$, which satisfies $A\in \mathcal{G}$, and define $Z^\prime=\min\{Z_1,Z_2\}=Z_1\one_{A}+Z_2\one_{A^c}$. We aim to show that $Z'\in \mathcal{B}$. To see it, on the one hand, one can check that
$Z^\prime \in\mathcal L^{\infty}(\mathcal G)$.
On the other hand,
%Since $u_1$ is convex, it follows that
%\begin{align*}
%	\E u_1((X-Z^*)^+)&= \E[\E [u_1((X- Z^*)^+) |\mathcal{G}]\\ &=\E\big[\one_{A} \E[u_1((X- Z_1)^+)|\mathcal{G}] +\one_{A^c}\E [u_1((X-Z_2)^+) |\mathcal{G}]\big]\\
%&=\E \big[ \E[\one_{A}u_1((X- Z_1)^+)+\one_{A^c}u_1((X-Z_2)^+) |\mathcal{G}]\big]\\
%	&= \E [\one_{A}u_1((X- Z_1)^+)] +\E[\one_{A^c} u_1((X-Z_2)^+)].
%\end{align*}
\begin{align*}
	\pi _\alpha (X, Z^\prime) &= \alpha \E[ u_1((X- Z_1\one_{A}-Z_2\one_{A^c})^+) ] +(1-\alpha) \E[ u_2((X-Z_1\one_{A}-Z_2\one_{A^c})^-)]\\
& = \alpha\E [\one_{A} u_1((X- Z_1)^+)] +\alpha\E[\one_{A^c} u_1((X-Z_2)^+)]\\
	&+(1-\alpha)\E [\one_{A} u_2((X- Z_1)^-)] +(1-\alpha)\E[\one_{A^c} u_2((X-Z_2)^-)]\\
&=\alpha\E [\one_{A} \E[u_1((X- Z_1)^+)|\mathcal G]] +\alpha\E[\one_{A^c} \E[u_1((X-Z_2)^+)|\mathcal G]]\\
	&+(1-\alpha)\E [\one_{A} \E[u_2((X- Z_1)^-)|\mathcal G]] +(1-\alpha)\E[\one_{A^c} \E[u_2((X-Z_2)^-)|\mathcal G]]\\
&=\E[\one_A\E[\phi_\alpha(X,Z_1)|\mathcal G]]+\E[\one_{A^c}\E[\phi_\alpha(X,Z_2)|\mathcal G]]\\
& =\E[\E[\phi_\alpha(X,Z_1)|\mathcal G]]\\
&=\pi_\alpha(X,Z_1),
\end{align*}
%	the last inequality holds for the reason that:
%	{\color{blue}
%\begin{align*}
%	\pi _\alpha (X,Z_1\one_{A}+ Z_2\one_{A^c}) - \pi _\alpha (X,Z_1)&\le \alpha\E [\one_{A} u_1((X- Z_1)^+)] +\alpha\E[\one_{A^c} u_1((X-Z_2)^+)]\\
%	&+(1-\alpha)\E [\one_{A} u_2((X- Z_1)^-)] +(1-\alpha)\E[\one_{A^c} u_2((X-Z_2)^-)]\\
%	&-\alpha\E[u_1((X-Z_1)^+)]-(1-\alpha)\E[u_2((X-Z_1)^-)\\
%	&= \alpha\E \one_{A^c} [u_1((X- Z_2)^+) -u_1((X-Z_1)^+)] \\
%	&+(1-\alpha)\E \one_{A^c} [u_2((X- Z_2)^-) -u_2((X- Z_1)^-)]\\
%	&=\E[\one_{A^c} \alpha\E  [u_1((X- Z_2)^+) -u_1((X-Z_1)^+)]
%	|\mathcal{G}]\\
%	&+\E[\one_{A^c}(1-\alpha)\E  [u_2((X- Z_2)^-) -u_2((X- Z_1)^-)]|\mathcal{G}]\\
%	&=\int_{A^c}^{}\E [\alpha u_1((X- Z_2)^+) +(1-\alpha)u_2((X- Z_2)^-)
%	|\mathcal{G}]\\
%	&-\E[\alpha u_1((X-Z_1)^+) +(1-\alpha)u_2((X- Z_1)^-)|\mathcal{G}]dP\\
%	&=0
%	\end{align*}
where the penultimate equality follows from statement (ii). Hence, we have $Z'\in\mathcal B$, and $\mathcal B$ is directed downward. This completes the proof.

%Thus we complete the proof. % of the theorem.

\end{proof}

We now present the formal definition of conditional generalized  quantiles.
\begin{definition}\label{def3-1}
Let $\mathcal G \subseteq  \mathcal{F}$ be a $\sigma$-algebra, $u_1, u_2 \in \mathcal U_{icx}^{+}$ and $\alpha \in (0,1)$. The \emph{conditional generalized quantile of $X \in \mathcal{L} ^{\infty}(\mathcal{F})$, denoted by $\rho_\alpha ^\mathcal G(X)$}, is defined as
\begin{align*}
	\rho_\alpha ^\mathcal G(X) = \essinf\argmin \limits_{Z\in \mathcal{L} ^{\infty}(\mathcal{G})} \pi _\alpha (X,Z).
 \end{align*}

\end{definition}
The operater ``$\essinf$" in the definition is used to give a unique random variable as the conditional generalized quantile.% with probability one.
\footnote{ Note that the risk measure $\rho_\alpha ^\mathcal G(X)$ is the left endpoint of the nonempty closed
interval $\argmin \limits_{Z\in \mathcal{L} ^{\infty}(\mathcal{G})} \pi _\alpha (X,Z)$ and we focus on the ``smallest'' version here.}
If $u_1(x)=u_2(x)=x^2$, $x\in\R$, we have that the conditional generalized quantile reduces to the conditional expectile proposed by \cite{BB18}, which is the unique minimizer of the following minimization problem
$$\min \limits_{Z\in\mathcal{L}^1(\mathcal{G})}\{\alpha \E[((X-Z)^{+})^2]+(1-\alpha) \E[((X-Z)^{-})^2]\}.$$
%where $X\in\mathcal{L}^1(\mathcal{F})$ and $\mathcal{G}\subseteq \mathcal{F}$ is a $\sigma$-algebra.

In the following proposition, we study the minimizer of the minimization problem \eqref{main-opt}.
\begin{proposition}\label{pro-minimizer}
Let $X \in \mathcal{L} ^{\infty}(\mathcal{F})$, $\mathcal G \subseteq  \mathcal{F}$, $u_1, u_2 \in \mathcal U_{icx}^{+}$, and $\alpha \in (0,1)$. For the minimization problem \eqref{main-opt},
\begin{enumerate}
\item[(i)] $Z^*\in \mathcal{L} ^{\infty}(\mathcal{G})$ is a minimizer if and only if it satisfies
\begin{align}
	\label{eq:0424-1}	\left\{
		\begin{array}{lcl}
		\alpha \E [(u_1)_-^\prime((X-Z^*)^+)|\mathcal{G}]\le(1-\alpha)\E[(u_2)_+^\prime((X-Z^*)^-)|\mathcal{G}],&\\
		\alpha \E[(u_1)_+^\prime((X-Z^*)^+)|\mathcal{G}]\ge(1-\alpha)\E[(u_2)_-^\prime((X-Z^*)^-)|\mathcal{G}],&\\
		\end{array}
		\right.
		\end{align}
where $(u_1)_-^\prime(x)=\frac{\partial^-}{\partial x}u_1(x)\one_{\{x>0\}}$, $(u_1)_+^\prime(x)=\frac{\partial^+}{\partial x}u_1(x)\one_{\{x\ge 0\}}$, $(u_2)_+^\prime(x)=\frac{\partial^+}{\partial x}u_2(x)\one_{\{x\ge0\}}$ and $(u_2)_-^\prime(x)=\frac{\partial^-}{\partial x}u_2(x)\one_{\{x>0\}}.$
\item[(ii)]  If additionally $u_1$, $u_2$ are differentiable with $u_1^\prime(0)= u_2^\prime(0)=0$, then $Z^*\in \mathcal{L} ^{\infty}(\mathcal{G})$ is a minimizer if and only if it satisfies
$$	\alpha \E [u_1^\prime(X-Z^*)^+|\mathcal{G}]=(1-\alpha)\E [u_2^\prime(X-Z^*)^-|\mathcal{G}].
$$
\end{enumerate}
\end{proposition}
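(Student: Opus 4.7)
The plan is to exploit the convexity of $Z \mapsto \pi_\alpha(X, Z)$ on $\mathcal{L}^\infty(\mathcal{G})$. Since $u_1, u_2$ are convex nondecreasing and the maps $z \mapsto (x-z)^+, (x-z)^-$ are convex, the integrand $\phi_\alpha(x, z)$ is convex in $z$, so $\pi_\alpha(X, \cdot)$ is convex. For convex functionals, $Z^*$ minimizes $\pi_\alpha(X, \cdot)$ if and only if the one-sided directional derivative
\begin{equation*}
D^+ \pi_\alpha(X, Z^*)(Y) := \lim_{\epsilon \downarrow 0} \frac{\pi_\alpha(X, Z^* + \epsilon Y) - \pi_\alpha(X, Z^*)}{\epsilon}
\end{equation*}
is nonnegative for every $Y \in \mathcal{L}^\infty(\mathcal{G})$.

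Next I would compute this derivative explicitly. Since $X$, $Z^*$ and $Y$ are bounded and $u_1, u_2$ are locally Lipschitz, the difference quotients are uniformly dominated, and dominated convergence allows me to exchange limit and expectation, reducing the problem to the pointwise right derivative of $\epsilon \mapsto \phi_\alpha(X, Z^* + \epsilon Y)$ at $\epsilon = 0^+$. A three-way case split according to the sign of $X - Z^*$ yields: on $\{X > Z^*\}$ only $u_1$ contributes, with the right derivative pairing $(u_1)'_-$ against $Y^+$ and $(u_1)'_+$ against $Y^-$; on $\{X < Z^*\}$ only $u_2$ contributes symmetrically; and on the kink set $\{X = Z^*\}$ the one-sided derivatives of $u_1, u_2$ at $0$ enter---the conventions $(u_i)'_-(x)\one_{\{x>0\}}$ versus $(u_i)'_+(x)\one_{\{x\ge 0\}}$ in the statement are tailored precisely to absorb this boundary contribution into a clean expression.

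From here the two inequalities of \eqref{eq:0424-1} follow by a testing argument. Taking $Y \ge 0$ in $\mathcal{L}^\infty(\mathcal{G})$ and applying the tower property to $D^+ \pi_\alpha(X, Z^*)(Y) \ge 0$, the arbitrariness of $Y$ (one may localize on any $\mathcal{G}$-measurable event) yields the first inequality $\p$-a.s.; choosing $Y \le 0$ yields the second. Conversely, if both inequalities hold, then $D^+ \pi_\alpha(X, Z^*)(Y) \ge 0$ for every $Y \in \mathcal{L}^\infty(\mathcal{G})$, and convexity promotes this to minimality. Part (ii) then follows from (i): under differentiability with $u_i'(0) = 0$ one has $(u_1)'_\pm = u_1'$ and $(u_2)'_\pm = u_2'$, the boundary contribution at $0$ vanishes, and the two inequalities collapse to the single equality stated.

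The main technical obstacle is the bookkeeping on $\{X = Z^*\}$: at this kink neither the positive nor the negative part drives the derivative alone, and both $u_1$ and $u_2$ contribute their right derivatives at $0$. Matching these contributions against the asymmetric indicator conventions in the definitions of $(u_i)'_\pm$ is what makes the form of \eqref{eq:0424-1} nontrivial; once this is handled, standard convex analysis together with the tower property completes the argument.
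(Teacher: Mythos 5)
Your proposal is correct and follows essentially the same route as the paper: the paper likewise characterizes minimality via one-sided derivatives of $t\mapsto\pi_\alpha(X,Z^*+t\one_A)$ at $t=0$ for $A\in\mathcal G$ (using dominated convergence and convexity), and then translates the resulting integrated inequalities over all $\mathcal G$-sets into the conditional inequalities \eqref{eq:0424-1}. Your use of general directions $Y\in\mathcal L^\infty(\mathcal G)$ with subsequent localization is just a mild repackaging of the paper's choice $Y=\pm\one_A$, and your handling of the kink set $\{X=Z^*\}$ matches the indicator conventions in the statement.
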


\begin{proof}
It is clear that (ii) is the special case of (i), and we present the proof of (i) here.
For $A\in\mathcal{G}$, define $$f_A(t):= \alpha \E [u_1((X-(Z^*+t\one_A))^+)] +(1-\alpha)\E [u_2((X-(Z^*+t\one_A))^-)].$$
By the dominated convergence theorem, $f_A$ always has left and right derivatives. We have
\begin{align*}
	%\left\{
	\begin{array}{lcl}
	(f_A)_+^\prime(t)= -\alpha \E [(u_1)_-^\prime((X-(Z^*+t\one_A))^+)\one_A] +(1-\alpha)\E [(u_2)_+^\prime((X-(Z^*+t\one_A))^-)\one_A],&\\
	(f_A)_-^\prime(t)= -\alpha \E [(u_1)_+^\prime((X-(Z^*+t\one_A))^+)\one_A] +(1-\alpha)\E [(u_2)_-^\prime((X-(Z^*+t\one_A))^-)\one_A].&\\
	\end{array}
	%\right .
	\end{align*}
Since $Z^*$ is a minimizer if and only if $(f_A)_+^\prime(0)\ge0 \ge (f_A)_-^\prime(0)$, we have that for each $A\in\mathcal{G}$,
\begin{align}
\label{eq:0414-2}	\left\{
	\begin{array}{lcl}
	\alpha \E [(u_1)_-^\prime((X-Z^*)^+)\one_A]\le (1-\alpha)\E [(u_2)_+^\prime((X-Z^*)^-)\one_A]&\\
	\alpha \E[(u_1)_+^\prime((X-Z^*)^+)\one_A]\ge (1-\alpha)\E[(u_2)_-^\prime((X-Z^*)^-)\one_A].&\\
	\end{array}
	\right.
	\end{align}
This is equivalent to \eqref{eq:0424-1}. %
%\begin{align*}
%	\left\{
%	\begin{array}{lcl}
%	\alpha \E[(u_1)_-^\prime((X-Z^*)^+|\mathcal{G})]\le(1-\alpha)\E[(u_2)_+^\prime((X-Z^*)^-|\mathcal{G})]&\\
%	\alpha \E[(u_1)_+^\prime((X-Z^*)^+|\mathcal{G})]\ge(1-\alpha)\E[(u_2)_-^\prime((X-Z^*)^-|\mathcal{G})].&\\
%	\end{array}
%	\right .
%	\end{align*}
We thus complete the proof of   (i).
\end{proof}
\begin{remark} Under the assumptions of Lemma \ref{lemma-def}, if $u_1$ and $u_2$ are strictly convex, then $\phi_\alpha(x,z)$ is strictly convex in $z$ for every $x$, and the minimization problem \eqref{main-opt} has a unique minimizer in the sense that $X^*=Y^*$ $\p \mbox{-}a.s.$ if $X^*\in \mathcal L^\infty(\mathcal G)$ and $Y^*\in \mathcal L^\infty(\mathcal G)$ are both minimizers. In this case, the operation $\essinf$ can be omitted.
%\begin{proof}
To see it, let $Z_1,Z_2\in\argmin \limits_{Z\in \mathcal{L} ^{\infty}(\mathcal{G})} \pi _\alpha (X,Z)$. Suppose that $Z_1\neq Z_2$, a.s.  There exist $\delta>0$ and $A_0\in\mathcal{G}$ with $\p(A_0)>0$ such that $Z_1>Z_2+\delta$ on $A_0$. By Proposition \ref{pro-minimizer} (i), we have that \eqref{eq:0414-2} holds for $A=A_0$ and $Z^*$ replaced by $Z_i$, $i=1,2$.
% note that if $Z^*\in\argmin \limits_{Z\in \mathcal{L} ^{\infty}(\mathcal{G})} \pi _\alpha (X,Z)$, then for each $A\in\mathcal{G}$, we have that the system of inequalities \eqref{eq:0414-2} holds.
%\begin{align*}
%	\left\{
%	\begin{array}{lcl}
%	\alpha \E [(u_1)_-^\prime((X-Z^*)^+)\one_A]\le (1-\alpha)\E [(u_2)_+^\prime((X-Z^*)^-)\one_A]&\\
%	\alpha \E[(u_1)_+^\prime((X-Z^*)^+)\one_A]\ge (1-\alpha)\E[(u_2)_-^\prime((X-Z^*)^-)\one_A].&\\
%	\end{array}
%	\right.
%	\end{align*}
 Note that $\phi _\alpha (X,Z)$ is strictly convex in $Z$, $(u_1)_-^\prime,(u_1)_+^\prime,(u_2)_-^\prime$ and $(u_2)_+^\prime$ are strictly increasing functions. We have that on $A_0$,
\begin{align*}
\alpha(u_1)_+^\prime((X-Z_1)^+)& < \alpha(u_1)_+^\prime((X-(Z_2+\delta))^+)
 <\alpha(u_1)_-^\prime((X-Z_2)^+)\\
& \leq(1-\alpha)(u_2)_+^\prime((X-Z_2)^-)
<(1-\alpha)(u_2)_-^\prime((X-(Z_2+\delta))^-)\\
& < (1-\alpha)(u_2)_-^\prime((X- Z_1)^-).
%~~{\p \mbox{-}a.s.}
\end{align*}
This contradicts $\alpha \E[(u_1)_+^\prime((X-Z_1)^+)\one_{A_0}]\ge (1-\alpha)\E[(u_2)_-^\prime((X-Z_1)^-)\one_{A_0}]$. Thus we conclude that the minimizer is  unique.
\end{remark}

\section{Conditional shortfall risk measures}
 \label{Conditional shortfall risk measures}
We introduce the definition of another class of risk measures, called conditional shortfall risk measures, and then show that it is a one-to-one mapping from the class of conditional generalized quantiles to the class of conditional shortfall risk measures. Denote by $\mathcal V=\{v: \R\to\R| v~\text{is~increasing},~v(0-)\le 0~{\rm and}~v(0+)\ge 0 \}$.
\begin{definition}\label{def3-2}
Let $X \in \mathcal{L} ^{\infty}(\mathcal{F})$, $\mathcal G \subseteq  \mathcal{F}$ be a $\sigma$-algebra, and $v\in\mathcal V$. We define the \emph{conditional shortfall risk measure}\footnote{The conditional shortfall risk measures were first mentioned by \cite{Weber06}. However, the paper focused on the characterization of the shortfall risk measure and did not study the properties of the risk measure as a dynamic risk measure.} as
\begin{align*}\label{def3}
	\rho_{v} ^\mathcal G(X) &= \essinf\left\{Z\in \mathcal{L} ^{\infty}(\mathcal{G}):\E [v(X-Z)|\mathcal{G}]\le 0 \right\}.
	 %&=\esssup\left\{Z\in \mathcal{L} ^{\infty}(\mathcal{G}):\E [v(X-Z)|\mathcal{G}]\ge 0 \right\}
	\end{align*}
\end{definition}
%Recall that the conditional shortfall risk measures were first introduced by \cite{Weber06} and defined as $$\rho_{l} ^\mathcal G(X) := \essinf\left\{Z\in \mathcal{L} ^{\infty}(\mathcal{G}):\E[l(X-Z)|\mathcal{G}]\le 0 \right\} $$ for a $l:\R\rightarrow \R$ being nondecreasing, nonconstant and with 0 in the interior of its range.
\begin{theorem}\label{th-eq}
\begin{enumerate}
\item[(i)]Let $X \in \mathcal{L} ^{\infty}(\mathcal{F})$, $\mathcal G \subseteq  \mathcal{F}$ be a $\sigma$-algebra, $u_1, u_2 \in \mathcal U_{icx}^{+}$ and $\alpha \in (0,1)$, $\rho_\alpha ^\mathcal G(X)$ be defined as in Definition $\ref{def3-1}$. We have
 $$\rho_\alpha ^\mathcal G(X) = \rho_{v} ^\mathcal G(X),$$ where $\rho_v ^\mathcal G(X)$ is a conditional shortfall defined as in Definition $\ref{def3-2}$ with
\begin{equation}\label{eq-v}
v(x) = \begin{cases}
		\alpha (u_1)_-^\prime(x), &x>0,\\
		-(1-\alpha) (u_2)_+^\prime(-x) ,&x\le 0.
		\end{cases}
\end{equation}
\item[(ii)] Let $X \in \mathcal{L} ^{\infty}(\mathcal{F})$, $\mathcal G \subseteq  \mathcal{F}$ be a $\sigma$-algebra and $v \in \mathcal V$, $\rho_v ^\mathcal G(X)$ be defined as in Definition $\ref{def3-2}$. We have $$\rho_{v} ^\mathcal G(X)= \rho_\alpha ^\mathcal G(X),$$ where $\rho_\alpha ^\mathcal G(X)$ is a conditional generalized quantile defined as in Definition $\ref{def3-1}$ with $\alpha \in (0,1)$ and
    \begin{align}\label{eq-u1u2}
    u_1(x) = \frac{1}{\alpha} \int_{0}^{x} v(t)dt,~u_2(x) = \frac{-1}{1-\alpha} \int_{-x}^{0} v(t)dt,~x\ge0.
    \end{align}
\end{enumerate}
\end{theorem}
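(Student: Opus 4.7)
The overall plan is to translate both $\rho_\alpha^\mathcal{G}(X)$ and $\rho_v^\mathcal{G}(X)$ into the same conditional inequality using Proposition \ref{pro-minimizer}(i), and then to argue that the essential infimum of the shortfall feasible set coincides with the smallest minimizer of $\pi_\alpha(X,\cdot)$.

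For Part (i), I would first check that $v$ defined by \eqref{eq-v} lies in $\mathcal V$: monotonicity on $(0,\infty)$ and on $(-\infty,0]$ comes from convexity of $u_1,u_2$, while the one-sided limits $v(0-)=-(1-\alpha)(u_2)'_+(0+)\le 0$ and $v(0+)=\alpha (u_1)'_+(0)\ge 0$ come from $u_1(0)=u_2(0)=0$ with $u_1,u_2$ increasing. A case split on $\{X>Z\}$ versus $\{X\le Z\}$ yields
\begin{align*}
v(X-Z) = \alpha (u_1)'_-(X-Z)\one_{\{X>Z\}} - (1-\alpha)(u_2)'_+(Z-X)\one_{\{X\le Z\}},
\end{align*}
so that the inequality $\E[v(X-Z)|\mathcal G]\le 0$ is a restatement of the first FOC in \eqref{eq:0424-1}. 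By Proposition \ref{pro-minimizer}(i), every minimizer of $\pi_\alpha(X,\cdot)$ lies in $S:=\{Z\in\mathcal L^\infty(\mathcal G):\E[v(X-Z)|\mathcal G]\le 0\}$, which gives $\rho_v^\mathcal G(X)=\essinf S\le \essinf M=\rho_\alpha^\mathcal G(X)$, where $M:=\argmin_{Z}\pi_\alpha(X,Z)$. For the reverse inequality I would show $Z^\star:=\essinf S$ lies in $M$: first, $Z^\star\in S$ by dominated/monotone convergence along a decreasing sequence in $S$ converging a.s.\ to $Z^\star$ (using monotonicity of $v$); second, the extremality of $Z^\star$ as the essential infimum of $S$ implies that any downward perturbation $Z^\star-\epsilon\one_A$ with $A\in\mathcal G$ of positive measure exits $S$, and letting $\epsilon\downarrow 0$ via the right-continuous modification of $v$ (which swaps $(u_i)'_-$ for $(u_i)'_+$) recovers the second FOC in \eqref{eq:0424-1}. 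Hence $Z^\star\in M$ and equality holds.

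For Part (ii), given $v\in\mathcal V$ I would first verify that $u_1,u_2$ defined by \eqref{eq-u1u2} lie in $\mathcal{U}_{icx}^+$: both functions are increasing and convex on $\R_+$ since $v\ge 0$ is increasing on $(0,\infty)$ and $-v\ge 0$ is decreasing in $|x|$ on $(-\infty,0)$; the boundary values $u_1(0)=u_2(0)=0$ are immediate, and the normalization $u_1(1)=u_2(1)=1$ follows after rescaling $v$ by a positive constant (which does not change $\rho_v^\mathcal G$) and choosing $\alpha\in(0,1)$ so that $\alpha=\int_0^1 v(t)\,dt$ and $1-\alpha=-\int_{-1}^0 v(t)\,dt$. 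Differentiating \eqref{eq-u1u2} gives $(u_1)'_-(x)=v(x)/\alpha$ and $(u_2)'_+(x)=-v(-x)/(1-\alpha)$ for $x>0$, which reproduces \eqref{eq-v}, and Part (i) applies in reverse to conclude $\rho_\alpha^\mathcal G(X)=\rho_v^\mathcal G(X)$. The main obstacle is the inequality $\rho_v^\mathcal G(X)\ge\rho_\alpha^\mathcal G(X)$ in Part (i): membership in $S$ only encodes the first FOC, so the second FOC must be extracted from the extremality of $\essinf S$ via a delicate limiting argument at the lower boundary, with special care needed at jump points of $v$ (notably at the origin) and at kinks of $u_1,u_2$.
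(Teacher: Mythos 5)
Your part (i) follows essentially the same route as the paper: identify the shortfall constraint $\E[v(X-Z)\mid\mathcal G]\le 0$ with the first-order condition of Proposition \ref{pro-minimizer}, and then argue that the essential infimum of the feasible set $S$ coincides with the smallest minimizer. Your argument that $Z^\star=\essinf S$ actually belongs to the set of minimizers is a legitimate, and in fact more detailed, justification of a step the paper merely asserts. One point needs care, though: monotone convergence along a decreasing sequence $Z_n\downarrow Z^\star$ in $S$ only gives $\E[v((X-Z^\star)-)\mid\mathcal G]\le 0$, because $v(X-Z_n)\uparrow v\bigl((X-Z^\star)-\bigr)$, and the inequality $v(x-)\le v(x)$ points the wrong way for concluding $Z^\star\in S$. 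You must invoke the left-continuity of the \emph{specific} $v$ in \eqref{eq-v} (left derivatives of convex functions are left-continuous and right derivatives are right-continuous, so $\alpha(u_1)'_-(x)$ is left-continuous on $(0,\infty)$ and $-(1-\alpha)(u_2)'_+(-x)$ is left-continuous on $(-\infty,0]$); with that observation the step closes.

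The genuine gap is in part (ii). Differentiating \eqref{eq-u1u2} does \emph{not} reproduce \eqref{eq-v}: for a merely increasing $v\in\mathcal V$ one gets $\alpha(u_1)'_-(x)=v(x-)$ and $-(1-\alpha)(u_2)'_+(-x)=v(x-)$, i.e.\ the \emph{left-continuous modification} $v_0=v(\cdot\,-)$ of $v$, which differs from $v$ at every jump point of $v$. Applying part (i) in reverse therefore only yields $\rho_\alpha^{\mathcal G}(X)=\rho_{v_0}^{\mathcal G}(X)$, and you still owe the identity $\rho_{v_0}^{\mathcal G}(X)=\rho_{v}^{\mathcal G}(X)$. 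This is precisely where the paper spends most of its effort: since $v_0\le v$ one gets $\rho_{v_0}^{\mathcal G}(X)\le\rho_{v}^{\mathcal G}(X)$ directly, and for the converse the paper sandwiches $v(\cdot-\epsilon)\le v_0\le v$, uses the translation identity $\rho_{v(\cdot-\epsilon)}^{\mathcal G}(X)=\rho_{v}^{\mathcal G}(X)-\epsilon$, and lets $\epsilon\downarrow 0$. Your proposal, as written, skips this step entirely by asserting that differentiation gives back $v$ itself. (On the positive side, you do address the normalization $u_1(1)=u_2(1)=1$, which the paper silently ignores; your rescaling fix is reasonable, modulo the degenerate case where $v$ vanishes on a neighbourhood of the origin, in which case $u_1$ or $u_2$ fails to be strictly increasing.)
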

\begin{proof}
\begin{enumerate}
\item[(i)]%When $u_1$ and $u_2$ are differentiable, from the second part of Theorem \ref{eopt}, we have the result that  $$\E[\alpha u_1^\prime((X-\rho_\alpha^{\mathcal{G}}(X))^+)-(1-\alpha) u_2^\prime((X-\rho_\alpha^{\mathcal{G}}(X))^-)|\mathcal{G}]= \E[v(X-\rho_\alpha^{\mathcal{G}}(X))|\mathcal{G}]=0.$$
%As is shown in Definition \ref{def3-2}, $$\rho_{v} ^\mathcal G(X) = \essinf\left\{Z\in \mathcal{L} ^{\infty}(\mathcal{G}):~\E[v(X-Z)|\mathcal{G}] \le 0 \right\},$$ hence it holds that $\rho_\alpha^{\mathcal{G}}(X)\geq \rho_{v}^{\mathcal{G}}(X)$. Assume that there exists $A\in \mathcal{G}$ with $\p(A)> 0$, such that $\rho_\alpha^{\mathcal{G}}(X)\one_A > \rho_{v}^{\mathcal{G}}(X)\one_A$, then out of the monotonicity of $v$, it follows: $$\E[v(X-\rho_v(X))\one_A] > \E[v(X-\rho_\alpha^{\mathcal{G}}(X))\one_A] = 0 $$ which contradicts the definition of $\rho_v^{\mathcal{G}}(X)$. Thus $\rho_\alpha^{\mathcal{G}}(X)= \rho_{v}^{\mathcal{G}}(X)~\p \mbox{-}a.s.$\\
%	If there are some points where $u_1$ and $u_2$ are not differentiable.
From Proposition $\ref{pro-minimizer}$ and the convexity of $u_1,u_2$, $\rho_{\alpha}^{\mathcal{G}}(X)$ can be expressed as $$\essinf \left\{Z\in \mathcal{L} ^{\infty}(\mathcal{G}):\alpha \E[(u_1)_-^\prime((X-Z)^+)|\mathcal{G}]\le(1-\alpha)\E [(u_2)_+^\prime((X-Z)^-)|\mathcal{G}]\right\}.$$
As shown in Definition $\ref{def3-2}$ with $$v(x) = \begin{cases}
\alpha (u_1)_-^\prime(x), &  x > 0, \\
-(1-\alpha) (u_2)_+^\prime(-x), &  x \le  0,
\end{cases} $$
$$\rho_{v} ^\mathcal G(X) = \essinf\left\{Z\in \mathcal{L} ^{\infty}(\mathcal{G}):\E[v(X-Z)|\mathcal{G}] \le 0 \right\}, $$
which is actually the $\rho_{\alpha}^{\mathcal{G}}(X)$ here. $v$ is an increasing function because $u_1$ and $u_2$ are strictly increasing, convex functions on $\R_+$ with nonnegative increasing left derivatives and right derivatives.
\item[(ii)] Suppose \eqref{eq-u1u2} holds.
By the previous argument in (i) of the theorem, we have the following expression:
\begin{align*}
		\rho_{\alpha}^{\mathcal{G}}(X)&=\essinf \{Z\in \mathcal{L} ^{\infty}(\mathcal{G}): \E[\alpha(u_1)_-^\prime((X-Z)^+)|\mathcal{G}]\\ & \quad \quad \quad \quad -(1-\alpha)\E[(u_2)_+^\prime((X-Z)^-)|\mathcal{G}] \le 0 \}\\ &= \essinf \left\{Z\in \mathcal{L} ^{\infty}(\mathcal{G}): \E [v_0(X-Z)|\mathcal{G}]\le0\right\}\\
		&= \rho_{v_0}^{\mathcal{G}}(X)
		\end{align*}
with
$$
v_0(x) = \begin{cases}
\alpha (u_1)_-^\prime(x)=v(x-), & x>0,\\
-(1-\alpha) (u_2)_+^\prime(-x)=v(x-),& x\le 0.
\end{cases}$$
Since $v$ is increasing, we have $v(x)\ge v(x-)$ for all $x\in\R$.%, and thus $\E [v(X)]\ge \E [v_0(X)]$.
It holds that
\begin{align*}
		\rho_{v_0}^{\mathcal{G}}(X) &= \essinf \left\{Z\in \mathcal{L} ^{\infty}(\mathcal{G}): \E v_0[(X-Z)|\mathcal{G}]\le 0\right\}\\&\le \essinf \left\{Z\in \mathcal{L} ^{\infty}(\mathcal{G}): \E v[(X-Z)|\mathcal{G}]\le 0\right\}\\&= \rho_{v}^{\mathcal{G}}(X).
	\end{align*}

Define $v_{\epsilon}(x)=v(x-\epsilon)$ with $\epsilon>0$, and we have $v_{\epsilon}(x)\le v(x-)\le v(x)$ for all $x\in\R$. It holds that $\rho_{v_{\epsilon}}^{\mathcal{G}}(X)\le \rho_{v_0}^{\mathcal{G}}(X)\le \rho_{v}^{\mathcal{G}}(X)$. On the other hand,
\begin{align*}
		\rho_{v_\epsilon}^{\mathcal{G}}(X) &= \essinf \left\{Z\in\mathcal L^{\infty}(\mathcal G): \E v_{\epsilon}(X-Z)\le 0\right\}\\
		&= \essinf \left\{Z-\varepsilon\in\mathcal L^{\infty}(\mathcal G): \E v(X-Z)\le 0\right\}\\
		&= \essinf \left\{Z\in\mathcal L^{\infty}(\mathcal G): \E v(X-Z)\le 0\right\}- \epsilon\\
		&= \rho_{v}^{\mathcal{G}}(X) - \epsilon.
\end{align*}
Hence, we have $\rho_{v_0}^{\mathcal{G}}(X)\ge \rho_{v}^{\mathcal{G}}(X) - \epsilon$ for any $\epsilon>0$. Furthermore, let $\epsilon\to 0$, and we obtain $\rho_{v_0}^{\mathcal{G}}(X)\ge \rho_{v}^{\mathcal{G}}(X)$. Therefore, we conclude that $\rho_\alpha^{\mathcal{G}}(X) = \rho_{v_0}^{\mathcal{G}}(X)= \rho_{v}^{\mathcal{G}}(X)$. This completes the proof.

%	As is shown in Definition $\ref{def3-2}$, $v$ is an increasing function, then for any $\varepsilon >0$, $v(x-\varepsilon)\le v(x-)\le v(x)$, thus $\rho_{v(\cdot-\varepsilon)}^{\mathcal{G}}(X)\le \rho_{v_0}^{\mathcal{G}}(X)\le \rho_{v}^{\mathcal{G}}(X)$.
%	For each $\omega$,
%	\begin{align*}
%		%\rho_{v(-\varepsilon)}^{\mathcal{G}}(X) &= \essinf \left\{Z\in \mathcal{L} ^{\infty}(\mathcal{G}): \E v(-\varepsilon)[(X-Z)|\mathcal{G}]\le 0\right\}\\ &= \essinf \left\{Z+\varepsilon: \E v[(X-Z)|\mathcal{G}]\le 0, Z\in \mathcal{L} ^{\infty}(\mathcal{G})\right\}\\&= \essinf \left\{Z\in \mathcal{L} ^{\infty}(\mathcal{G}): \E v[(X-Z)|\mathcal{G}]\le 0\right\}+ \varepsilon\\&= \rho_{v}^{\mathcal{G}}(X) + \varepsilon
%		\rho_{v(\cdot-\varepsilon)}^{\mathcal{G}}(X)(\omega) &= \inf \left\{z\in \R: \E v(-\varepsilon)(Y-z)\le 0\right\}\\
%		&= \inf \left\{z-\varepsilon: \E v(Y-z)\le 0, z\in \R\right\}\\
%		&= \inf \left\{z\in \R: \E v(Y-z)\le 0\right\}- \varepsilon\\
%		&= \rho_{v}^{\mathcal{G}}(X)(\omega) - \varepsilon
%\end{align*}
%    where $Y$ obeys regular conditional distribution of $X$ given $\mathcal{G}$, written as $F_\mathcal{G}(\cdot, \omega)$.
%
%	Eventually, $\rho_{v(-\varepsilon)}^{\mathcal{G}}(X)\rightarrow\rho_{v}^{\mathcal{G}}(X)$ as $\varepsilon \rightarrow 0$, thus $\rho_\alpha^{\mathcal{G}}(X) = \rho_{v_0}^{\mathcal{G}}(X)= \rho_{v}^{\mathcal{G}}(X)~\p \mbox{-}a.s.$
\end{enumerate}
\end{proof}
Theorem \ref{th-eq} demonstrates the equivalence between conditional generalized quantiles and conditional shortfall risk measures, which provides us with a new perspective to study the properties of conditional generalized quantiles. In the following, we provide some examples of generalized conditional quantiles.
\begin{example} \begin{enumerate}[(i)]
		\item 	If $\mathcal{G} = \left\{\emptyset ,\Omega\right\}$, it means that there is no extra information provided, which reveals that there is no difference between generalized quantiles (\citealp{BB14}) and conditional generalized quantiles.

%then $\rho_{\alpha}^\mathcal{G}$ is the static generalized quantile in \cite{BB14}.
%		Applying the set $\mathcal{G} = \left\{\emptyset ,\Omega\right\}$ means that
		\item If $\mathcal{G} = \mathcal{F}$, then the full information of  $X\in \mathcal L^\infty(\mathcal F)$ is available, which implies $\rho_{\alpha}^\mathcal{G}(X) = X$.

%then $\rho_{\alpha}^\mathcal{G}(X) = X$ for all $X\in \mathcal L^\infty(\mathcal F)$. Since $\mathcal{G} = \mathcal{F}$, which means that all information is available, min $\pi _\alpha (X,Z)$ is attained at $Z=X$.
	\end{enumerate}
\end{example}

\begin{example}(Discrete case)
	Let $\Omega = \left\{\omega_1,\omega_2,\omega_3 \right\}$, $\p(\omega_i) = 1/3$, $X(\omega_i)=i$, $\alpha= 1/3$ and $u_1(x)= x^2, u_2(x)= e^x$. If $\mathcal{G}=  \left\{\emptyset,\Omega,A,A^c \right\}$ with $A = \left\{\omega_1, \omega_2 \right\}$, then we have
$$
\rho_{1/3}^\mathcal{G}(X)(\omega) = \left\{
	\begin{array}{rcl}
	1 ,& &{\omega \in A,}\\
	3, & &{\omega \in A^c,}
\end{array} \right.
$$
If $\mathcal{G} = \left\{\emptyset ,\Omega\right\}$, then
%	$$\rho_{1/3}(\rho_{1/3}^\mathcal{G})(X))=1$$
$\rho^{\mathcal G}_{1/3}(X)= a$ where $a$ is the solution of $e^{a-1} + 2a -5=0$, and one can calculate that $a\approx 1.594$.
	
\end{example}

\begin{example}(Conditional VaR and conditional expectile)
\begin{itemize}
\item[(i)] Let $u_1(x) = u_2(x) = x$. For a sub-$\sigma$-field $\mathcal G$, $\rho_{\alpha}^\mathcal{G}$ is a conditional quantile which can be represented as\footnote{
	Note that quantile as a risk measure is known as value-at-risk (VaR), and the conditional value-at-risk ($\mathrm{CVaR}$, also named $\mathrm{ES}$) is not the conditional quantile that we display here, although it has an ambiguous name.}
%the conditional quantile is conditional value-at-risk (VaR)% $\mathrm{VaR}$ is written as \begin{align*}
	%		q_{\alpha}(X) &= \inf \left\{z\in \R:\p\( X \le z\)\ge \alpha \right\}\\ &= \inf \left\{z\in \R:\E (\alpha - \one\left\{X \le z\right\}) \le 0\right\}
	%	\end{align*}
	%	with \\
	%	Similarly, we induce the conditional quantile as:
	\begin{align*}
	\rho_{\alpha}^\mathcal{G}(X)% &= \essinf \left\{Z\in \mathcal{L} ^{\infty}(\mathcal{G}): \E [v(X-Z)|\mathcal{G}] \le 0\right\}\\
	&= \essinf \left\{Z\in \mathcal{L} ^{\infty}(\mathcal{G}): \E [\alpha - \one_{\left\{X \le Z\right\}}|\mathcal{G}] \le 0\right\}\\
	& = \essinf \left\{Z\in \mathcal{L} ^{\infty}(\mathcal{G}):   \p(X \le Z |\mathcal{G})  \le \alpha\right\}.
	\end{align*}
	%where $v(x) = \alpha - \one_{\(-\infty,0 \]}.$
%	If $\mathcal G=\{\emptyset,\Omega\}$, then it reduces to the static VaR, which is defined as
	%\begin{align*}
%	{\rm VaR}_\alpha(X) = \inf \left\{x\in\R:   \p(X \le x)  \le \alpha\right\}.
%	\end{align*}
\item[(ii)] Let $u_1(x) = u_2(x) = x^2$. We have
$$
    \rho_{\alpha}^\mathcal{G}(X) = \essinf \left\{ Z\in \mathcal{L} ^{\infty}(\mathcal{G}):\alpha \E [(X-Z)^+] +(1-\alpha)\E [(X-Z)^-]\le 0\right\},
$$
which is called conditional expectile in \cite{BB18}.
	%= \essinf \argmin \limits_{Z\in \mathcal{L} ^{\infty}(\mathcal{G})} \alpha \E u_1((X-Z)^+) +(1-\alpha)\E u_2((X-Z)^-)
\end{itemize}
\end{example}

%\begin{example}
%	(Conditional expectile, \citealp{BB18}) Let $u_1(x) = u_2(x) = x^2$, then $$\rho_{\alpha}^\mathcal{G}(X) = \essinf \left\{ Z\in \mathcal{L} ^{\infty}(\mathcal{G}):\alpha \E [(X-Z)^+] +(1-\alpha)\E [(X-Z)^-]\le 0\right\}.$$
%\end{example}
% And when $\alpha \ge 1/2$, it can be checked to be conditional coherent risk measure.
\begin{example}\label{entropy}
	(Conditional entropic risk measure) Let
$v_\gamma(x)= e^{\gamma x}-1,~\gamma >0$ and $v_\gamma(x)= 1-e^{\gamma x},~\gamma <0$.
%The static entropic risk measure  is
%	\begin{align*}
%	\rho_{\gamma}(X) &=  \inf \left\{z\in \R: \E e^{\gamma (X-z)} \le v(0) \right\}  = \frac{1}{\gamma}\log \E e^{\gamma X}%\\ &= \inf \left\{z\in \R:\log \E e^{\gamma (X-z)} \le 0 \right\}
%	\end{align*}
	In this case, $\rho_{v}^{\mathcal{G}}$ is called the conditional entropic risk measure which is given as
	\begin{align*}
	\rho_{v_\gamma}^{\mathcal{G}}(X)
%&= \essinf \left\{Z\in \mathcal
%	{L} ^\infty (\mathcal{G}): \E [e^{\gamma (X-Z)}|\mathcal{G}] \le  v(0) \right\} \\ &= \essinf\left\{Z\in \mathcal
%	{L} ^\infty (\mathcal{G}):\E[e^{\gamma X}|\mathcal{G}] \le e^{\gamma Z} \right\}\\ &
=\frac{1}{\gamma}\log \E [e^{\gamma X}|\mathcal{G}].
	\end{align*}
For the limiting cases of $\gamma=0$ and $\gamma=+\infty$, the conditional entropic risk measure can be represented as $\E[X|\mathcal G]$ and $\esssup [X|\mathcal G]$, respectively (\citealp{FS16}).
	One can verify that the conditional entropic risk measure with the parameter $\gamma\ge 0$ is a conditional convex shortfall risk measure (see Theorem \ref{th-cxcoherency}). Moreover, we will show that conditional entropic risk measures is the unique class of conditional generalized quantiles that satisfy time consistency (see Theorem \ref{th-timeconsistency}).
\end{example}

\section{Characterizations of conditional convexity and conditional coherency}\label{sec:characterization}
In this section, we aim to present the equivalent characterizations of the coherency and convexity of generalized conditional quantiles for the case in which the sub-$\sigma$-algebra $\mathcal G$ is generated by a finite partition of $\Omega$. Before showing the characterizations, we first explore some basic properties of the conditional generalized quantiles including monotonicity, conditional translation invariance, conditional convexity and conditional positive homogeneity.

\begin{proposition}\label{pro-basic}
Let
%$X \in \mathcal{L} ^{\infty}(\mathcal{F})$,
$\mathcal G \subseteq  \mathcal{F}$ be a $\sigma$-algebra, $u_1, u_2 \in \mathcal U_{icx}^{+}$ and $\alpha \in (0,1)$. The conditional generalized quantile $\rho_\alpha ^\mathcal G$ is defined in Definition $\ref{def3-1}$. The following statements hold.
\begin{enumerate}
\item[(i)] If $\alpha_{1}\le \alpha_{2}$, then for all $X\in\mathcal L^\infty(\mathcal F)$, $\rho_{\alpha_{1}}^{\mathcal{G}}(X) \le \rho_{\alpha_{2}}^{\mathcal{G}}(X)$.
\item[(ii)] Monotonicity: for all $X, Y \in \mathcal L^\infty(\mathcal{F})$, if $X \le Y$, then $\rho_{\alpha}^{\mathcal{G}}(X) \le \rho_{\alpha}^{\mathcal{G}}(Y)$.
\item [(iii)] Conditional translation invariance: for all $X\in\mathcal L^\infty(\mathcal F)$ and $H\in \mathcal L^\infty(\mathcal{G})$, $\rho_{\alpha}^{\mathcal{G}}(X+H) = \rho_{\alpha}^{\mathcal{G}}(X) + H$.

\item[(iv)] Normalization:
%If $u_1, u_2 \in \mathcal U_{icx}^{+}$ are continuously differentiable and strictly convex functions with $u_1^\prime(0)= u_2^\prime(0)= 0$, then
$\rho_{\alpha}^{\mathcal{G}}(0)=0$.

\item [(v)] Conditional convexity: if $u_1, u_2 \in \mathcal U_{icx}^{+}$ are twice differentiable, and $u_1^\prime$ is convex, $u_2^\prime$ is concave, and $\alpha u_1^{\prime\prime}(0)\ge (1-\alpha)u_2^{\prime\prime}(0)$,
%the function $v$ defined by \eqref{eq-v} is convex,
then for all $X,Y \in \mathcal L^\infty(\mathcal{F})$ and $\Lambda\in \mathcal L^\infty(\mathcal{G})$ with $0 \le \Lambda \le1$, $\rho_{\alpha}^\mathcal{G}(\Lambda X+ (1-\Lambda)Y)\le \Lambda\rho_{\alpha}^\mathcal{G}(X)+ (1-\Lambda)\rho_{\alpha}^\mathcal{G}(Y)$.
\item[(vi)] Conditional positive homogeneity: if $u_1(x)= a_1x^\beta$, $u_2(x)= a_2x^\beta$ with $a_1, a_2 >0,\beta > 1$, then for all $X\in\mathcal L^\infty(\mathcal{F})$ and $\Lambda\in\mathcal L^\infty(\mathcal{G})$ with $\Lambda \ge 0$, $\rho_{\alpha}^\mathcal{G}(\Lambda X)= \Lambda\rho_{\alpha}^\mathcal{G}(X)$.
\end{enumerate}
\end{proposition}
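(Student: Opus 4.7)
The unifying strategy is to transport every property of $\rho_\alpha^{\mathcal G}$ through the shortfall representation of Theorem~\ref{th-eq}, which writes
\[
\rho_\alpha^{\mathcal G}(X) = \essinf\bigl\{Z\in\mathcal L^\infty(\mathcal G):\E[v(X-Z)\mid\mathcal G]\le 0\bigr\},
\]
with $v$ given by \eqref{eq-v}. Each of (i)-(vi) then reduces to an elementary statement about $v$ or about this monotone level set.

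For (i), the map $\alpha\mapsto v_\alpha(x)$ is pointwise nondecreasing in $\alpha$, because $\alpha(u_1)_-^\prime(x)$ is increasing in $\alpha$ for $x>0$ and $-(1-\alpha)(u_2)_+^\prime(-x)$ is increasing in $\alpha$ for $x\le 0$; hence the feasible set shrinks in $\alpha$ and its essinf grows. For (ii), $v$ is increasing, so $X\le Y$ forces $\{Z:\E[v(Y-Z)\mid\mathcal G]\le 0\}\subseteq\{Z:\E[v(X-Z)\mid\mathcal G]\le 0\}$. For (iii), the bijection $Z\mapsto Z-H$ maps the level set for $X+H$ onto $H$ plus the level set for $X$. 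For (iv), using $(-Z)^+=Z^-$ and $(-Z)^-=Z^+$, one computes $\pi_\alpha(0,Z)=\alpha\E[u_1(Z^-)]+(1-\alpha)\E[u_2(Z^+)]\ge 0=\pi_\alpha(0,0)$, with equality iff $Z\equiv 0$ by strict monotonicity of $u_1,u_2$, so $Z=0$ is the unique minimizer.

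The substantive work is in (v), which is the main obstacle. The plan is first to verify that $v$ is convex on $\R$ and then to apply Jensen. Under the stated differentiability hypothesis, $v(x)=\alpha u_1^\prime(x)$ on $(0,\infty)$ is convex because $u_1^\prime$ is, and $v(x)=-(1-\alpha)u_2^\prime(-x)$ on $(-\infty,0)$ is convex because $u_2^\prime$ is concave (the reflection $y\mapsto -y$ turns the concavity of $u_2^\prime$ into the convexity of $v$ on the negative half-line). At the join $x=0$ the one-sided slopes are $v^\prime(0^-)=(1-\alpha)u_2^{\prime\prime}(0)$ and $v^\prime(0^+)=\alpha u_1^{\prime\prime}(0)$, so the third hypothesis $\alpha u_1^{\prime\prime}(0)\ge(1-\alpha)u_2^{\prime\prime}(0)$ is exactly what forces the slope to be nondecreasing across the origin; this matching at $0$ is the crux of the argument. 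Once $v$ is convex, setting $Z_X=\rho_\alpha^{\mathcal G}(X)$, $Z_Y=\rho_\alpha^{\mathcal G}(Y)$, $X_\Lambda=\Lambda X+(1-\Lambda)Y$ and $Z_\Lambda=\Lambda Z_X+(1-\Lambda)Z_Y$, the identity $X_\Lambda-Z_\Lambda=\Lambda(X-Z_X)+(1-\Lambda)(Y-Z_Y)$ combined with Jensen gives $v(X_\Lambda-Z_\Lambda)\le \Lambda v(X-Z_X)+(1-\Lambda)v(Y-Z_Y)$. Taking $\E[\cdot\mid\mathcal G]$ and using that each of $Z_X,Z_Y$ lies in the shortfall level set (a consequence of Theorem~\ref{lemma-def}(iii) together with the identification in the proof of Theorem~\ref{th-eq}, and the left-continuity of $v$) yields $\E[v(X_\Lambda-Z_\Lambda)\mid\mathcal G]\le 0$; hence $Z_\Lambda$ is feasible for $X_\Lambda$ and taking essinf over the feasible set concludes.

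For (vi), the power-law choice $u_i(x)=a_i x^\beta$ makes the associated $v$ homogeneous of degree $\beta-1$: $v(\Lambda y)=\Lambda^{\beta-1}v(y)$ for any $\mathcal G$-measurable $\Lambda\ge 0$ and every $y\in\R$. Consequently $\E[v(\Lambda X-\Lambda Z)\mid\mathcal G]=\Lambda^{\beta-1}\E[v(X-Z)\mid\mathcal G]$, so on $\{\Lambda>0\}$ the shortfall level set for $\Lambda X$ equals $\Lambda$ times the level set for $X$; since essinf commutes with multiplication by a nonnegative $\mathcal G$-measurable factor, $\rho_\alpha^{\mathcal G}(\Lambda X)=\Lambda\rho_\alpha^{\mathcal G}(X)$ on this event, while on $\{\Lambda=0\}$ both sides vanish by (iv).
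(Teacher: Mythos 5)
Your proposal is correct and follows essentially the same route as the paper: every part is transported through the shortfall representation of Theorem \ref{th-eq}, with (i)--(iii) reduced to monotonicity of $v$ in $\alpha$ and in $x$ and to the substitution $Z\mapsto Z-H$, (iv) argued directly from $\pi_\alpha(0,\cdot)\ge 0$, (v) reduced to convexity of $v$ (checked via convexity of $u_1'$, concavity of $u_2'$, and the slope-matching condition at $0$), and (vi) to the degree-$(\beta-1)$ homogeneity of $v$. The only cosmetic difference is in (v), where you substitute the attained minimizers $Z_X,Z_Y$ and verify feasibility of their convex combination, while the paper runs the same Jensen inequality through a set inclusion and an essinf over the two-parameter family $\Lambda Z_1+(1-\Lambda)Z_2$; the two arguments are interchangeable.
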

\begin{proof}
%Propositions (a) and (b) are from monotonicity, and (c), (d) and (e) can be checked via the form of conditional shortfall.
\begin{enumerate}
\item[(i)] From Theorem $\ref{th-eq}$, we have

$$\rho_\alpha^{\mathcal{G}}(X) =\rho_v^{\mathcal{G}}(X) = \essinf \left\{Z\in \mathcal{L} ^{\infty}(\mathcal{G}): \E [v(X-Z)|\mathcal{G}]\le 0\right\}$$ with an increasing function
$$v(x) = \begin{cases}
\alpha (u_1)_-^\prime(x), & x>0,\\
-(1-\alpha) (u_2)_+^\prime(-x) ,& x\le 0.
\end{cases}$$
Since $\alpha_1\le\alpha_2$, and $(u_1)_-^\prime,~(u_2)_+^\prime$ are nonnegative, we have $\alpha_1 (u_1)_-(x) \le \alpha_1 (u_1)_-(x)$ for all $x>0$ and $-(1-\alpha_1)(u_2)_+^\prime(-x)\le
-(1-\alpha_2)(u_2)_+^\prime(-x)$ for all $x\le0$. Hence,
%$-(1-\alpha_1)\le -(1-\alpha_2)$ and a nonnegative $(u_1)_-^\prime(x),~(u_2)_+^\prime(-x)$ contribute to $v_{\alpha_1}(x)\le v_{\alpha_2}(x)$, Thus,
we have $\rho_{\alpha_{1}}^{\mathcal{G}}(X) \le \rho_{\alpha_{2}}^{\mathcal{G}}(X)$.
\item[(ii)] Since $X \le Y$ implies $v(X-Z)\le v(Y-Z)$, we have
\begin{align*}
		 	\rho_\alpha^{\mathcal{G}}(X) =\rho_v^{\mathcal{G}}(X) &= \essinf \left\{Z\in \mathcal{L} ^{\infty}(\mathcal{G}): \E [v(X-Z)|\mathcal{G}]\le 0\right\}\\
		 	&\le \essinf \left\{Z\in \mathcal{L} ^{\infty}(\mathcal{G}): \E [v(Y-Z)|\mathcal{G}]\le 0\right\}\\
		 	&= \rho_v^{\mathcal{G}}(Y)=\rho_\alpha^{\mathcal{G}}(Y).
		 \end{align*}
\item[(iii)]From Theorem $\ref{th-eq}$, for any $H\in \mathcal L^\infty(\mathcal{G})$,
\begin{align*}
	      	\rho_\alpha^{\mathcal{G}}(X+H) =\rho_v^{\mathcal{G}}(X+H) &= \essinf \left\{Z\in \mathcal{L} ^{\infty}(\mathcal{G}): \E [v(X-(Z-H))|\mathcal{G}]\le 0\right\}\\ &= \essinf \left\{Z+H\in \mathcal{L} ^{\infty}(\mathcal{G}): \E [v(X-Z)|\mathcal{G}]\le 0\right\}\\ &= \essinf \left\{Z\in \mathcal{L} ^{\infty}(\mathcal{G}): \E [v(X-Z)|\mathcal{G}]\le 0\right\} + H\\ &= \rho_\alpha^{\mathcal{G}}(X) +H.
	      \end{align*}

\item[(iv)] From Definition \ref{def3-1}, we have
\begin{align*}
	\rho_\alpha ^\mathcal G(0) &= \essinf\argmin \limits_{Z\in \mathcal{L} ^{\infty}(\mathcal{G})} \pi _\alpha (0,Z)\\
&=\essinf\argmin \limits_{Z\in \mathcal{L} ^{\infty}(\mathcal{G})} \alpha \E[ u_1((-Z)^+)] +(1-\alpha)\E [u_2((-Z)^-)].
 \end{align*}
Note that $u_1$ and $u_2$ are nonnegative with $u_1(0)=u_2(0)=0$. Hence, one can verify that the minimizer of above equation is attained at $Z=0$. Hence, we have $\rho_\alpha ^\mathcal G$ is normalized.
%Since $u_1, u_2 \in \mathcal U_{icx}^{+}$ are strictly convex functions with $u_1^\prime(0)= u_2^\prime(0)= 0$, we have $v$ is a strictly increasing function with $v(0) = 0$, which implies the normalization of $\rho_{\alpha}^\mathcal{G}(X)$, that is,
%$$
%\rho_{\alpha}^\mathcal{G}(0) = \essinf \left\{Z:\E [v(-Z)|\mathcal{G}]\le 0 \right\} = 0.
%$$
\item[(v)]
We first verify that the convexity of the function $v$ defined by \eqref{eq-v} implies the conditional convexity of $\rho_\alpha^{\mathcal G}$.
%From Theorem $\ref{th-eq}$, there exists a convex function $v$ such that
%$$\rho_\alpha^{\mathcal{G}}(X) = \rho_v^\mathcal{G}(X) = \essinf \left\{Z\in \mathcal{L} ^{\infty}(\mathcal{G}): \E [v(X-Z)|\mathcal{G}]\le 0\right\}.$$
%To show the conditional convexity, it suffices to show that for $X,Y \in \mathcal L^\infty(\mathcal{F})$ and $\Lambda\in \mathcal L^\infty(\mathcal{G})$ with $\Lambda\in(0,1)$,
%$$\rho_{v}^\mathcal{G}(\Lambda X+ (1-\Lambda)Y)\le \Lambda\rho_{v}^\mathcal{G}(X)+ (1-\Lambda)\rho_{v}^\mathcal{G}(X).$$
From the convexity of $v$, it holds that  for $X,Y \in \mathcal L^\infty(\mathcal{F})$ and $\Lambda\in \mathcal L^\infty(\mathcal{G})$ with $0 \le \Lambda \le 1$, $$\E [v(\Lambda X+ (1-\Lambda)Y-Z)|\mathcal{G}]\le \Lambda \E [v(X-Z)|\mathcal{G}] +(1-\Lambda)\E [v(Y-Z)|\mathcal{G}],$$ and thus, we have
\begin{align*}
	      &\left\{Z\in \mathcal{L} ^{\infty}(\mathcal{G}): \E [v(\Lambda X+ (1-\Lambda)Y-Z)|\mathcal{G}]\le 0\right\}\supseteq\\&  \left\{Z\in \mathcal{L} ^{\infty}(\mathcal{G}): \Lambda \E [v(X-Z)|\mathcal{G}] +(1-\Lambda)\E [v(Y-Z)|\mathcal{G}]\le 0\right\},
	      \end{align*}
which follows that
\begin{align*}
	      \rho_{v}^\mathcal{G}(\Lambda X+ (1-\Lambda)Y)& = \essinf \left\{Z\in \mathcal{L} ^{\infty}(\mathcal{G}): \E [v(\Lambda X+ (1-\Lambda)Y-Z)|\mathcal{G}]\le 0\right\}\\
	      &\le \essinf\{\Lambda Z_1+(1-\Lambda)Z_2 \in \mathcal{L} ^{\infty}(\mathcal{G}): \Lambda \E [v(X-Z_1)|\mathcal{G}]+\\
	      	      & \quad \quad \quad \quad \quad(1-\Lambda)\E [v(Y-Z_2)|\mathcal{G}]\le 0 \} \\
	      &\le \Lambda \essinf \{Z_1\in \mathcal{L} ^{\infty}(\mathcal{G}): \E [v(X-Z_1)|\mathcal{G}]\le 0\}+\\
	      & \quad(1-\Lambda) \essinf \left\{Z_2\in \mathcal{L} ^{\infty}(\mathcal{G}): \E [v(Y-Z_2)|\mathcal{G}]\le 0\right\}\\&=  \Lambda \rho_{v}^\mathcal{G}(X)+ (1-\Lambda)\rho_{v}^\mathcal{G}(Y).
	      \end{align*}
Hence, we conclude that the convexity of $v$ implies the conditional convexity of $\rho_\alpha^{\mathcal G}$. Finally, note that the conditions of $u_1,u_2$ in (v) are sufficient for the convexity of $v$. This completes the proof of (v).

\item[(vi)]
%$u_1(x)= a_1x^\beta$, $u_2(x)= a_2x^\beta$ with $a_1, a_2 >0$, f
From Theorem $\ref{th-eq}$, we obtain
$\rho_\alpha^{\mathcal{G}}(X) = \essinf \left\{Z\in \mathcal{L} ^{\infty}(\mathcal{G}): \E [v(X-Z)|\mathcal{G}]\le 0\right\}$ with %increasing function
$$v(x) = \begin{cases}
\alpha \beta a_1 x^{\beta-1}, & x>0,\\
(1-\alpha) \beta a_2(-x)^{\beta-1} ,& x\le 0.
\end{cases}$$
Thus for all $\Lambda\in\mathcal L^\infty(\mathcal{G})$ with $\Lambda \ge 0$, we have
\begin{align*}
	       	\rho_{\alpha}^\mathcal{G}(\Lambda X)&= \essinf \left\{Z\in \mathcal{L} ^{\infty}(\mathcal{G}): \E [v(\Lambda X-Z)|\mathcal{G}]\le 0\right\}\\
	       	& = \essinf \{Z\in \mathcal{L} ^{\infty}(\mathcal{G}): \E [ \alpha \beta a_1 ((\Lambda X-Z)^+)^{\beta-1}+\\
	       	&\quad \quad \quad \quad \quad(1-\alpha) \beta a_2((\Lambda X-Z)^-)^{\beta-1}|\mathcal{G}]\le 0\}\\
	       	&= \essinf \{\Lambda Z \in \mathcal{L} ^{\infty}(\mathcal{G}): \Lambda^{\beta-1}\E [ \alpha \beta a_1 (( X-Z)^+)^{\beta-1}+\\
	       	&\quad \quad \quad \quad \quad(1-\alpha) \beta a_2(( X-Z)^-)^{\beta-1}|\mathcal{G}]\le 0\}\\
	       	&= \Lambda \essinf \{Z\in \mathcal{L} ^{\infty}(\mathcal{G}): \E [ \alpha \beta a_1 (( X-Z)^+)^{\beta-1}+\\
	       	&\quad \quad \quad \quad \quad(1-\alpha) \beta a_2(( X-Z)^-)^{\beta-1}|\mathcal{G}]\le 0\}\\
	       	&= \Lambda\rho_{\alpha}^\mathcal{G}(X).
	       \end{align*}
\end{enumerate}
This completes the proof.
\end{proof}

\begin{theorem}\label{th-cxcoherency}
%Let $X \in \mathcal{L} ^{\infty}(\mathcal{F})$ and
%$\mathcal G \subseteq  \mathcal{F}$ be a $\sigma$-algebra generated by a finite partition of $\Omega$.
Let $u_1, u_2 \in \mathcal U_{icx}^{+}$ be twice differentiable and strictly convex functions with $u_1^\prime(0)= u_2^\prime(0)= 0$ and $\alpha \in (0,1)$. The conditional generalized quantile $\rho_\alpha ^\mathcal G$ is defined as in Definition $\ref{def3-1}$. We have
\begin{itemize}
\item[(i)]
%$\rho_{\alpha}^\mathcal{G}(X)$ is a conditional convex risk measure if and only if $u_1^\prime$ is convex, $u_2^\prime$ is concave, and $\alpha u_1^{\prime\prime}(0)\ge (1-\alpha)u_2^{\prime\prime}(0)$.
If $u_1^\prime$ is convex, $u_2^\prime$ is concave, and $\alpha u_1^{\prime\prime}(0)\ge (1-\alpha)u_2^{\prime\prime}(0)$, then $\rho_{\alpha}^\mathcal{G}$ is a conditional convex risk measure. Moreover, if $\mathcal G \subseteq  \mathcal{F}$ is generated by a finite partition of $\Omega$, then the converse direction holds.

    %that is, the function $v$ defined by \eqref{eq-v} is convex.
\item[(ii)]
%$\rho_{\alpha}^\mathcal{G}(X)$ is a conditional coherent risk measure if and only if $u_1(x)= x^2$, $u_2(x)= x^2$ and $\alpha \ge1/2$.

If $u_1(x)= x^2$, $u_2(x)= x^2$ and $\alpha \ge1/2$, then $\rho_{\alpha}^\mathcal{G}$ is a conditional coherent risk measure. Moreover, if $\mathcal G \subseteq  \mathcal{F}$ is generated by a finite partition of $\Omega$, then the converse direction holds.

\end{itemize}
\end{theorem}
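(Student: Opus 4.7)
The forward directions of both (i) and (ii) are essentially direct consequences of Proposition \ref{pro-basic}. For the converse directions, my plan is to exploit the finite-partition structure of $\mathcal G$ to reduce the problem to the static case and then invoke the known static characterization of convex/coherent generalized quantiles from \cite{BB14}.

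\textbf{Forward directions.} By Theorem \ref{th-eq}, $\rho_\alpha^{\mathcal G}=\rho_v^{\mathcal G}$ with $v$ given by \eqref{eq-v}. As already shown inside the proof of Proposition \ref{pro-basic}(v), convexity of $v$ on $\R$ implies the conditional convexity of $\rho_\alpha^{\mathcal G}$. Under the hypotheses in (i), $v$ is convex on each side of $0$ (on $(0,\infty)$, $v'=\alpha u_1''$ is nondecreasing iff $u_1'$ is convex; on $(-\infty,0)$, $v'(x)=(1-\alpha)u_2''(-x)$ is nondecreasing iff $u_2'$ is concave), while the one-sided derivatives at $0$ satisfy $v_+'(0)=\alpha u_1''(0)\ge (1-\alpha)u_2''(0)=v_-'(0)$; hence $v$ is globally convex. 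Combined with Proposition \ref{pro-basic}(ii)--(iv), this yields (i). For (ii), the choice $u_1(x)=u_2(x)=x^2$ satisfies all hypotheses of (i) whenever $\alpha\ge 1/2$, and conditional positive homogeneity is then Proposition \ref{pro-basic}(vi) with $a_1=a_2=1$, $\beta=2$.

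\textbf{Converse directions: reduction to the static case.} Suppose $\mathcal G=\sigma(A_1,\ldots,A_n)$ with $\p(A_i)>0$ for all $i$. Any $Z\in\mathcal L^\infty(\mathcal G)$ is constant on each $A_i$ with value $z_i=Z|_{A_i}$, so the objective of \eqref{main-opt} decomposes atom by atom as
\begin{equation*}
\pi_\alpha(X,Z)=\sum_{i=1}^n \p(A_i)\,\E[\phi_\alpha(X,z_i)\mid A_i],
\end{equation*}
and the minimization separates across atoms. Consequently $\rho_\alpha^{\mathcal G}(X)\one_{A_i}=\rho_\alpha^{(i)}(X)\one_{A_i}$, where $\rho_\alpha^{(i)}$ is the \emph{static} generalized quantile on the non-atomic probability space $(A_i,\mathcal F|_{A_i},\p(\cdot\mid A_i))$. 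Choosing $\Lambda$ to be a constant $\lambda\in(0,1)$ in the conditional convexity axiom and restricting to $A_i$ yields static convexity of $\rho_\alpha^{(i)}$; choosing $\Lambda$ a nonnegative constant in the positive homogeneity axiom yields static positive homogeneity. Since each space is non-atomic, bounded random variables on $A_i$ realize every bounded distribution, so these are genuine static properties on a non-atomic space.

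\textbf{Invoking the static characterization.} Applying the known static characterization from \cite{BB14} to $\rho_\alpha^{(i)}$ yields exactly the stated conditions: for (i), convexity forces $u_1'$ convex, $u_2'$ concave, and $\alpha u_1''(0)\ge (1-\alpha)u_2''(0)$; for (ii), positive homogeneity on a non-atomic space forces $u_1,u_2$ to be quadratic, the normalization $u_1(1)=u_2(1)=1$ then pins them down to $u_1(x)=u_2(x)=x^2$, and the convexity constraint reduces to $\alpha\ge 1/2$. The main technical obstacle, and the step deserving most care, is the separation identity $\rho_\alpha^{\mathcal G}(X)\one_{A_i}=\rho_\alpha^{(i)}(X)\one_{A_i}$: it depends on both the atomic structure of $\mathcal G$ and the essential-infimum selection in Definition \ref{def3-1}, and it is precisely the step that would fail for a general sub-$\sigma$-algebra.
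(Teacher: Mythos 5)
Your proof is correct and follows essentially the same route as the paper: the forward directions come from Proposition \ref{pro-basic} (with your explicit verification that the hypotheses make $v$ globally convex, which the paper only asserts), and the converses are obtained by restricting to a single atom of the finite partition, noting that the minimization in \eqref{main-opt} separates across atoms so that $\rho_\alpha^{\mathcal G}$ on an atom is the static generalized quantile under the (non-atomic) conditional probability, and then invoking the known static characterization. The only discrepancy is bibliographic: the second-derivative conditions in (i) are exactly the convexity of the loss function $v$ in the shortfall representation of Theorem \ref{th-eq}, and the static converse the paper invokes is Corollaries 3.1 and 3.2 of \cite{Weber06} rather than \cite{BB14}, which fully settles only the coherent case (ii).
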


\begin{proof}
(i)
%We first prove sufficiency.
By Proposition \ref{pro-basic} (ii), (iii), (iv) and (v), we have $\rho_{\alpha}^\mathcal{G}$ is monotonic, conditional translation invariant, normalized and conditional convex which means it is a conditional convex risk measure.
%It remains to show that $\rho_{\alpha}^\mathcal{G}(0) = 0$.
%%By the definition of a conditional convex risk measure in Definition \ref{def2-1}, we need to consider the monotonicity, conditional translation-invariance, conditional convexity and $\rho_{\alpha}^\mathcal{G}(0) = 0$.
%By Theorem \ref{th-eq}, we have for all $X\in\mathcal L^{\infty}(\mathcal F)$,
%$$
%\rho_{\alpha}^{\mathcal G}(X)=\rho_{v}^{\mathcal G}(X)=\essinf\left\{Z\in \mathcal{L} ^{\infty}(\mathcal{G}):\E[v(X-Z)|\mathcal{G}] \le 0 \right\}
%$$
%with $v$ defined by \eqref{eq-v}, i.e.,
%$$
%v(x) = \begin{cases}
%\alpha (u_1)^\prime(x), & x>0,\\
%-(1-\alpha) (u_2)^\prime(-x), & x\le 0.
%\end{cases}
%$$
%Since $u_1, u_2 \in \mathcal U_{icx}^{+}$ are strictly convex functions with $u_1^\prime(0)= u_2^\prime(0)= 0$, we have $v$ is a strictly increasing function with $v(0) = 0$, which implies the normalization of $\rho_{\alpha}^\mathcal{G}(X)$, that is,
%$$
%\rho_{\alpha}^\mathcal{G}(0) = \essinf \left\{Z:\E [v(-Z)|\mathcal{G}]\le 0 \right\} = 0.
%$$
%Sufficiency can be immediately verified by Proposition \ref{pro-basic} (b), (c) and (d).
To see the ``Moreover" part, suppose that $\mathcal G=\sigma(A_1,A_2,\cdots,A_n)$, where $\{A_1,A_2,\cdots,A_n\}$ is the partition of $\Omega$ with $\p(A_i)>0$ for $i=1,2,\cdots,n$.
Denote by $F_{X|\mathcal{G}}(\cdot, \omega)$ the regular conditional distribution of $X$ on $\mathcal{G}$. It is clear that for $\omega\in A_i$,
$$
F_{X|\mathcal{G}}(x, \omega)=\p(X\le x|A_i).
$$
Then, the conditional risk measure $\rho_\alpha^{\mathcal G}(X)$ can be obtained in the following way:
$$
\rho_\alpha^{\mathcal G}(X)(\omega)=\inf\left\{z: \int_{\R}v(x-z){\d} F_{X|\mathcal{G}}(x, \omega)\right\}.
$$
Define a convex set of random variables:
$$
\mathcal X=\{X\in\mathcal L^{\infty}(\mathcal F): X(\omega)=0~{\rm for~all}~\omega\in\cup_{i=2}^n A_i\}.
$$
For any $X\in\mathcal X$, one can obtain
$$
\rho_\alpha^{\mathcal G}(X)(\omega)=0,~{\rm for}~\omega\in\cup_{i=2}^n A_i,
$$
and
$$
\rho_\alpha^{\mathcal G}(X)(\omega)=\inf\left\{z: \int_{\R}v(x-z){\d} \p(X\le x|A_1)\right\},~{\rm for}~\omega\in A_1.
$$
The convexity of $\rho_\alpha^{\mathcal G}$ implies that for $X_1,X_2\in\mathcal X$ and $\lambda\in(0,1)$,
$$
\rho_\alpha^{\mathcal G}(\lambda X_1+(1-\lambda) X_2)(\omega)\le \lambda\rho_\alpha^{\mathcal G}( X_1)(\omega)+(1-\lambda)\rho_\alpha^{\mathcal G}( X_2)(\omega),~{\rm for}~\omega \in A_1,
$$
which is equivalent to
$$
\rho_\alpha(F_{\lambda X_1+(1-\lambda) X_2}^{A_1})\le \lambda \rho_\alpha(F_{X_1}^{A_1})+(1-\lambda) \rho_\alpha(F_{X_2}^{A_1}),
$$
where $\rho_\alpha(F):=\inf\left\{z:\int_{\R}v(x-z){\d}F(x)\right\}$ is the classic shortfall risk measure and $F_X^{A_1}(x):=\p(X\le x|A_1)$ is the conditional distribution function. Applying the proof of Corollary 3.1 in \cite{Weber06}, we finally obtain the desired result.

(ii) Suppose $u_1(x)= x^2$, $u_2(x)= x^2$ and $\alpha \ge1/2$. It follows from Proposition \ref{pro-basic} (i), (ii), (iii) and (iv) that $\rho_{\alpha}^\mathcal{G}$ is a conditional coherent risk measure.
%Sufficiency follows from Proposition \ref{pro-basic} (i), (ii), (iii) and (iv).
By the previous argument of (i), the ``Moreover" part follows immediately from the proof of Corollary 3.2 in \cite{Weber06}.

\end{proof}

\begin{remark} It is worth noting that
the assumption that $\mathcal G$ is generated by a finite partition of
$\Omega$ is necessary in the converse direction.
%can not be relaxed directly in Theorem \ref{th-cxcoherency}.
We illustrate this point by giving a counter-example here. Suppose $\mathcal G=\mathcal F$, and we have $\rho_{\alpha}^{\mathcal G}(X)=X$ for all $X\in\mathcal L^{\infty}(\mathcal F)$. This means that for arbitrary $u_1,u_2\in\mathcal U_{icx}^+$, $\rho_{\alpha}^{\mathcal G}$ is always a conditional coherent risk measure.

The conditional risk measure in Theorem \ref{th-cxcoherency} (ii)
is called the conditional expectile introduced by \cite{BB18},
 %can be found in \cite{BB18}, which is called the conditional expectile
and has been discussed in detail. By Theorem \ref{th-cxcoherency}, we conclude that conditional expectile is the only class of conditional coherent risk measures in the class of conditional generalized quantiles if $\mathcal G$ is generated by a finite partition of $\Omega$.
\end{remark}

\section{Further properties}\label{Properties}
In this section, we first explore some futher properties of the conditional generalized quantiles, and then investigate the risk measure under the dynamic framework.
%In the following theorem, we explore further properties of the conditional generalized quantile. (a) states that the conditional generalized quantile is the static generalized quantile given the information in $\mathcal{G}$, which reveals the conditional distribution invariance because $\rho_\alpha^{\mathcal{G}}(X)$ only depends on the conditional distribution of $X$. (b) shows the continuity property.

\begin{theorem}\label{pro-further}
Let $X \in\mathcal{L}^\infty(\mathcal{F})$, $\mathcal G \subseteq  \mathcal{F}$ be a $\sigma$-algebra, $u_1, u_2 \in \mathcal U_{icx}^{+}$ and $\alpha \in (0,1)$. $\rho_\alpha ^\mathcal G(X)$ is defined as in Definition $\ref{def3-1}$. The following statements hold.
\begin{enumerate}
\item[(i)] $\rho_\alpha ^\mathcal G(X)$ satisfies $\rho_\alpha ^\mathcal G(X)(\omega)= \rho_\alpha(F_\mathcal{G}(\cdot, \omega))$, where $F_\mathcal{G}(\cdot, \omega)$ is a regular conditional distribution of X on $\mathcal{G}$.
%\item[(b)] if f :$\R \to \R$ is increasing and convex, then $f(\rho_\alpha ^\mathcal G(X))\le\rho_\alpha ^\mathcal G(f(X))$.
\item[(ii)] If $X_n\uparrow X$ as $n\to\infty$, then $ \rho_\alpha^\mathcal{G}(X_n)\uparrow \rho_\alpha^\mathcal{G}(X)$ as $n\to\infty$.

\end{enumerate}
\end{theorem}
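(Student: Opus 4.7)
Both parts rest on the shortfall representation from Theorem \ref{th-eq},
\[\rho_\alpha^{\mathcal G}(X) = \essinf\{Z\in\mathcal L^\infty(\mathcal G) : \E[v(X-Z)|\mathcal G] \le 0\},\]
where $v$ is given in \eqref{eq-v}, and on the following technical fact that I would establish first: although the essinf need not itself lie in the defining set when $v$ has a jump, one has $\E[v(X-\rho_\alpha^{\mathcal G}(X)-\epsilon)|\mathcal G]\le 0$ for every constant $\epsilon>0$. This is because the defining set is upward-closed and is directed downward (so contains an a.s.\ decreasing sequence converging to the essinf); conditional monotone convergence applied to that sequence yields $\E[v((X-\rho_\alpha^{\mathcal G}(X))-)|\mathcal G]\le 0$, and $v(X-\rho_\alpha^{\mathcal G}(X)-\epsilon)\le v((X-\rho_\alpha^{\mathcal G}(X))-)$.

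For (i), since $Z\in\mathcal L^\infty(\mathcal G)$ is $\mathcal G$-measurable, the disintegration formula gives $\E[v(X-Z)|\mathcal G](\omega) = \int_{\R} v(x-Z(\omega))\,\d F_{\mathcal G}(x,\omega)$ for a.e.\ $\omega$. Define the candidate $\hat Z(\omega) := \rho_\alpha(F_{\mathcal G}(\cdot,\omega))$ with $\rho_\alpha(F):=\inf\{z:\int v(x-z)\,\d F(x)\le 0\}$ the static shortfall. I would then check: (a) $\mathcal G$-measurability of $\hat Z$ via $\{\hat Z<t\}=\bigcup_{q\in\mathbb Q,\,q<t}\{\omega : \int v(x-q)\,\d F_{\mathcal G}(x,\omega)\le 0\}$, each building block being $\mathcal G$-measurable since $\omega\mapsto F_{\mathcal G}(x,\omega)$ is; (b) $|\hat Z|\le\|X\|_\infty$, since $F_{\mathcal G}(\cdot,\omega)$ is supported in $[-\|X\|_\infty,\|X\|_\infty]$; (c) the pointwise analogue of the technical fact above shows $\hat Z+\epsilon$ lies in the defining essinf set for every $\epsilon>0$, so $\rho_\alpha^{\mathcal G}(X)\le\hat Z+\epsilon$ a.s., and letting $\epsilon\to 0$ gives $\rho_\alpha^{\mathcal G}(X)\le\hat Z$. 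The reverse inequality is immediate: any $Z$ in the defining set satisfies $Z(\omega)\ge\hat Z(\omega)$ a.s.\ by the pointwise definition of $\rho_\alpha$.

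For (ii), by Proposition \ref{pro-basic}(ii) the sequence $\rho_\alpha^{\mathcal G}(X_n)$ is a.s.\ nondecreasing and bounded above by $\rho_\alpha^{\mathcal G}(X)$, so converges a.s.\ to some $Y\le\rho_\alpha^{\mathcal G}(X)$. For the reverse, fix $\eta>0$ and take $\delta\in(0,\eta)$. By the technical fact, $\E[v(X_n-\rho_\alpha^{\mathcal G}(X_n)-\delta)|\mathcal G]\le 0$; since $Y\ge\rho_\alpha^{\mathcal G}(X_n)$ and $v$ is increasing, also $\E[v(X_n-Y-\delta)|\mathcal G]\le 0$. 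Conditional monotone convergence as $X_n-Y-\delta\uparrow X-Y-\delta$ gives $\E[v((X-Y-\delta)-)|\mathcal G]\le 0$, and since $v(X-Y-\eta)\le v((X-Y-\delta)-)$, one obtains $\E[v(X-(Y+\eta))|\mathcal G]\le 0$. The shortfall characterization then yields $Y+\eta\ge\rho_\alpha^{\mathcal G}(X)$; sending $\eta\to 0$ finishes.

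The main obstacle I anticipate is the measurability of $\hat Z$ in (i), which demands that $\rho_\alpha(F)$ depend measurably on $F$; the countable-threshold sandwich described above handles this, but care is needed because the static infimum may not be attained. The secondary nuisance, appearing in (ii), is the potential left-discontinuity of $v$, which is neutralized uniformly via the two-parameter $\delta$--$\eta$ device.
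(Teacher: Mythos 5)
Your proof is correct and follows the paper's overall route: both parts are run through the shortfall representation of Theorem \ref{th-eq}, and part (i) is obtained by factorizing $\E[v(X-Z)|\mathcal G]$ through the regular conditional distribution. For (i) the paper's own proof is a two-line assertion of this factorization; your version supplies the details it omits (measurability of $\omega\mapsto\rho_\alpha(F_{\mathcal G}(\cdot,\omega))$ via the rational-threshold union, the bound $|\hat Z|\le\|X\|_\infty$, and the two inequalities, with the $\epsilon$-shift compensating for the fact that the static infimum need not be attained), and all of these checks are needed and correct. The genuine divergence is in (ii). The paper argues via $0=\E[v(X_n-\rho_\alpha^{\mathcal G}(X_n))|\mathcal G]\to\E[v(X-Z)|\mathcal G]$, which tacitly assumes both that the first-order condition holds with equality and that $v$ is continuous along the relevant sequence; neither is guaranteed for a general $v\in\mathcal V$ built from $u_1,u_2\in\mathcal U_{icx}^{+}$ via \eqref{eq-v} (the conditional quantile, where $v$ jumps at $0$, is the standard counterexample to such arguments). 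Your preliminary fact $\E[v(X-\rho_\alpha^{\mathcal G}(X)-\epsilon)|\mathcal G]\le 0$, obtained from the downward-directedness of the acceptance set together with conditional monotone convergence, and the two-parameter $\delta$--$\eta$ device that only ever uses one-sided inequalities, repair exactly this point; your argument is therefore not just a cleaner write-up but a more general one, valid for discontinuous $v$. The trade-off is length: the paper's version is shorter and suffices under the extra smoothness hypotheses ($u_1,u_2$ differentiable with $u_1'(0)=u_2'(0)=0$) that appear elsewhere in the paper, whereas yours covers the full class stated in the theorem.
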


\begin{proof}
%\begin{enumerate}
%\item[(a)]
(i) From Theorem $\ref{th-eq}$, we obtain the following equation:
\begin{align*}
	\rho_\alpha^{\mathcal{G}}(X) &= \essinf \left\{Z\in \mathcal{L} ^{\infty}(\mathcal{G}): \E[ v(X-Z)|\mathcal{G}]\le 0\right\}.
%&= \essinf \left\{Z\in \mathcal{L} ^{\infty}(\mathcal{G}): \E[ v(X|\mathcal{G}-Z)]\le 0\right\}.
\end{align*}
Then, for almost every $\omega \in \Omega$,
$$\rho_\alpha^{\mathcal{G}}(X)(\omega) = \inf \left\{z: \E[ v(Y-z)]\le 0\right\},$$
where $Y$ is a random variable with distribution $F_\mathcal{G}(\cdot, \omega)$.
Thus, $\rho_\alpha ^\mathcal G(X)(\omega)= \rho_\alpha(F_\mathcal{G}(\cdot, \omega))$, where $F_\mathcal{G}(\cdot, \omega)$ is a regular conditional distribution of X on $\mathcal{G}$.
%\item[(b)]

(ii) Since $X_n\uparrow X$, we have that $\rho_\alpha^\mathcal{G}(X_n) \leq \rho_\alpha^\mathcal{G}(X_{n+1})$. Let $\rho_\alpha^\mathcal{G}(X_n)\uparrow Z$ with $Z\leq \rho_\alpha^\mathcal{G}(X)$.
From the monotonicity of $v$, it follows that
$$v(X_n-\rho_\alpha^\mathcal{G}(X_n))\leq v(X-\rho_\alpha^\mathcal{G}(X_1)).$$
By the dominated convergence theorem, we have
$$0 = \E [v(X_n-\rho_\alpha^\mathcal{G}(X_n))|\mathcal{G}] \rightarrow
\E [v(X-Z)|\mathcal{G}].$$
It follows that $\rho_\alpha^\mathcal{G}(X)= Z$, and thus, $\rho_\alpha^\mathcal{G}(X_n)\uparrow \rho_\alpha^\mathcal{G}(X)$ as $n\to\infty$.
{\iffalse
\item[(d)] Note that for  $\lambda \in(0,1)$ and $X,Y\in\mathcal{L}^\infty(\mathcal{F})$, there exists $A \in \mathcal{F}$ with $\mathbb{P}(A)=\lambda$ such that $\one_{A}, X$ and $Y$ are pairwise independent. It is equivalent to show the mixture concavity of generalized quantile $\rho_\alpha(X)$, that is, $$\rho_\alpha(\lambda X+(1-\lambda)Y))\ge\lambda\rho_\alpha(X)+(1-\lambda)\rho_\alpha(Y),$$

    Letting $Z=\one_{A} X+\one_{A^c}Y$ and $\mathcal{G}=\left\{A, A^{c}, \emptyset, \Omega\right\},$ it is in fact to prove, i.e., for $X,Y\in\mathcal{L}^\infty(\mathcal{F})$, $0<\lambda<1$, it holds that According to results in \cite{Weber06}, it is equivalent to show the concavity of function $v$ in \eqref{eq-v}. Thus, we complete the proof.\fi}
%\end{enumerate}
\end{proof}

Theorem \ref{pro-further} (i)
states that the conditional generalized quantile is the static generalized quantile given the information in $\mathcal{G}$, which reveals the conditional distribution invariance because $\rho_\alpha^{\mathcal{G}}(X)$ only depends on the conditional distribution of $X$, and (ii) illustrates that the risk measure is continuous for monotonically convergent sequence.

Next, we investigate conditional generalized quantiles in a dynamic framework. Consider a set of dates $\mathcal T=[0,T]$ when the riskiness of a final payoff at time $T$ is assessed. We introduce a filtration $\{\mathcal F_t\}_{t\in\mathcal T}$ where $\mathcal F_t$ represents the information available at time $t$. Moreover, suppose that $\mathcal F_0=\{\emptyset,\Omega\}$ and $\mathcal F_T=\mathcal F$. We recall the definition of \emph{dynamic risk measure}. For the detailed introduction of dynamic risk measure, we refer to \cite{A07}, \cite{CDK06}, \cite{FP06} and \cite{KS09} and the references therein.
%and set $\mathcal L_n^\infty:= \mathcal L^\infty(\mathcal F_n)$ for all $1\le n\le N$.

\begin{definition}[Definition 8 of \citealp{DS05}]
A dynamic risk measure is a family $\{\rho^{\mathcal F_n}\}_{n=0}^N$ such that $\rho^{\mathcal F_n}: \mathcal L^\infty(\mathcal F)\to \mathcal L^\infty({\mathcal F_n})$ is a conditional risk measure for all $0\le n\le N$. It is a dynamic convex (coherent) risk measure if all components $\rho^{\mathcal F_n}$ are conditional convex (coherent) risk measures.
\end{definition}

The following proposition is a direct result from Theorem \ref{th-cxcoherency}.

\begin{proposition}
Let $u_1, u_2 \in \mathcal U_{icx}^{+}$ be twice differentiable and strictly convex functions with $u_1^\prime(0)= u_2^\prime(0)= 0$ and $\alpha \in (0,1)$. The conditional generalized quantile $\rho_\alpha ^\mathcal G$ is defined as in Definition $\ref{def3-1}$. We have
\begin{itemize}
\item[(i)]
If $u_1^\prime$ is convex, $u_2^\prime$ is concave, and $\alpha u_1^{\prime\prime}(0)\ge (1-\alpha)u_2^{\prime\prime}(0)$, then $\{\rho_{\alpha}^{\mathcal F_n}\}_{n=0}^{N}$ is a dynamic convex risk measure.

\item[(ii)]
If $u_1(x)= x^2$, $u_2(x)= x^2$ and $\alpha \ge 1/2$, then $\{\rho_{\alpha}^{\mathcal{F}_n}\}_{n=0}^{N}$ is a dynamic coherent risk measure.
\end{itemize}
\end{proposition}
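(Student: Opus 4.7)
The plan is to recognize this proposition as a direct corollary of Theorem \ref{th-cxcoherency}. Recall that by the definition stated just before this proposition, a dynamic convex (coherent) risk measure is simply a family $\{\rho^{\mathcal F_n}\}_{n=0}^N$ such that each component $\rho^{\mathcal F_n}$ is a conditional convex (coherent) risk measure with respect to the sub-$\sigma$-algebra $\mathcal F_n$. So the task reduces to verifying, for each individual $n$, that $\rho_\alpha^{\mathcal F_n}$ is a conditional convex (respectively coherent) risk measure under the stated assumptions on $u_1$, $u_2$, and $\alpha$.

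For part (i), I would fix an arbitrary $n \in \{0, 1, \ldots, N\}$ and apply Theorem \ref{th-cxcoherency}(i) with $\mathcal G = \mathcal F_n$. The hypotheses of Theorem \ref{th-cxcoherency}(i), namely that $u_1'$ is convex, $u_2'$ is concave, and $\alpha u_1''(0) \ge (1-\alpha) u_2''(0)$, are conditions on the utility functions and on $\alpha$ only and carry no dependence on the choice of sub-$\sigma$-algebra. Hence they hold under the hypotheses of the present proposition, and the theorem yields that $\rho_\alpha^{\mathcal F_n}$ is a conditional convex risk measure, i.e.\ it satisfies monotonicity, conditional translation invariance, normalization and conditional convexity (properties (P1)--(P4)). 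Since $n$ was arbitrary, the family $\{\rho_\alpha^{\mathcal F_n}\}_{n=0}^N$ is a dynamic convex risk measure by Definition.

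For part (ii), the argument is completely parallel: fix $n$, take $\mathcal G = \mathcal F_n$, and invoke Theorem \ref{th-cxcoherency}(ii) with the choice $u_1(x) = u_2(x) = x^2$ and $\alpha \ge 1/2$. That theorem additionally gives conditional positive homogeneity (P5), so each $\rho_\alpha^{\mathcal F_n}$ is a conditional coherent risk measure, and the family is a dynamic coherent risk measure.

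There is no significant obstacle here, since the nontrivial content (the static characterization of conditional convexity/coherency for a single sub-$\sigma$-algebra $\mathcal G$) was already established in Theorem \ref{th-cxcoherency}, and the dynamic notion is defined componentwise. The only point to emphasize in writing is that the sufficient conditions on $u_1$, $u_2$, $\alpha$ are independent of $n$, so a single application of the static result at each level of the filtration suffices.
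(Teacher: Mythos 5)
Your proof is correct and matches the paper's approach: the paper states this proposition is "a direct result from Theorem \ref{th-cxcoherency}" and gives no further argument, which is precisely your componentwise application of the static characterization at each $\mathcal F_n$.
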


%Consider dynamic risk measure $(\rho,\rho^{\mathcal{G}}): \mathcal{L}^\infty(\mathcal{F})\to \R\times \mathcal{L}^\infty(\mathcal{G})$, where
%$\mathcal{G}$ is a sub-$\sigma$-algebra of $\mathcal F$, and $\rho=\rho^{\mathcal G_0}$ with $\mathcal G_0=\{\emptyset,\Omega\}$.
%represents the information available.
The property of time consistency, usually referring to sequential consistency, supermartingale property and dynamic consistency, plays a crucial role in dynamic decision-making problem. We collect the definition of the property below (see e.g., \citealp{AP11}; \citealp{BB18}).
%Later, the time consistency in dynamic conditional risk measures will be investigated.
\begin{definition}
	A dynamic risk measure $\{\rho^{\mathcal F_n}\}_{n=1}^N$
%$(\rho,\rho^{\mathcal{G}}): \mathcal{L}^\infty(\mathcal{F})\to \R\times \mathcal{L}^\infty(\mathcal{G})$
is said
\begin{enumerate}[(i)]
\item \emph{sequentially consistent}, if for all $X\in\mathcal{L}^\infty(\mathcal{F})$ and
    $t\in\mathcal T$,
    %$\mathcal{G}_1\subseteq \mathcal{G}_2 \subseteq \mathcal{F}$,
  $$
  \rho^{\mathcal{F}_0} (X) \le 0 \Rightarrow   \rho^{\mathcal{F}_{t}}(X)\le 0 \text{ and } \rho^{\mathcal{F}_0}(X)\ge 0 \Rightarrow  \rho^{\mathcal{F}_{t}}(X)\ge 0.
  $$

\item \emph{dynamic consistent},
%if for all $X,\in\mathcal{L}^\infty(\mathcal{F})$ and $t\in\mathcal T$,
%$$
%\rho^{\mathcal F_0}(X)=\rho^{\mathcal F_0}(\rho^{\mathcal F_n}(X)).
%$$
if for all $X, Y\in\mathcal{L}^\infty(\mathcal{F})$ and $t\in\mathcal T$,
    $$
    \rho^{\mathcal F_{t}}(X)=\rho^{\mathcal F_{t}}(Y)\Rightarrow \rho^{\mathcal F_{0}}(X)=\rho^{\mathcal F_{0}}(Y).
    $$
%if for all $X,Y\in\mathcal{L}^\infty(\mathcal{F})$ and $\mathcal G\subseteq\mathcal F$, $$\rho^{\mathcal{G}}(X)=\rho^{\mathcal{G}}(Y) \Rightarrow\rho (X)=\rho(Y).
%    %~~\text{for all } X,Y\in\mathcal{L}^\infty(\mathcal{F}).
%    $$

\item  to have the \emph{supermartingale property}, if for all $X\in\mathcal{L}^\infty(\mathcal{F})$ and $t\in\mathcal T$, $$
    \rho^{\mathcal F_0}(X)\ge\E[\rho^{\mathcal{\mathcal F}_t}(X)].
    %~~\text{for all } X\in\mathcal{L}^\infty(\mathcal{F}).
    $$
 \end{enumerate}
\end{definition}
Sequential consistency describes  that the conclusions  made with more information should still hold with less information.  We will show later that the dynamic generalized quantiles is sequentially consistent. Dynamic consistency is stronger than sequential consistency, and it is equivalent to \emph{tower property} (\citealp{DS05}), that is, $\rho^{\mathcal F_0}(\rho^{\mathcal{F}_t}(X))=\rho^{\mathcal F_0}(X)$, for monetary and law invariant dynamic risk measures\footnote{We say $\{\rho^{\mathcal F_t}\}_{t\in\mathcal T}$ is monetary if $\rho^{\mathcal F_t}$ is monotonic and translation invariant for all $t\in\mathcal T$.
%it is monotonic and translation invariant.
The law invariant property means that
$\rho^{\mathcal F_0}(X)=\rho^{\mathcal F_0}(Y)$ whenever $X,Y$ has the same distribution function.}. \cite{KS09} proved that the dynamic entropic risk measure (see Example \ref{entropy}) is the unique class of  monetary and law invariant dynamic risk measures that satisfy dynamic consistency. Similarly, we will show that supermartingale property characterizes dynamic entropic risk measure among dynamic generalized quantiles.

%that is, $\rho_{\alpha}^{\mathcal F_0}$ is monotonic, translation invariant and satisfies $\rho_{\alpha}^{\mathcal F_0}(X)=\rho_{\alpha}^{\mathcal F_0}(Y)$ whenever $X,Y$ has the same distribution function.

%The dynamic consistency is equivalent to \emph{recursiveness} (\citealp{DS05}), i.e., $\rho^{\mathcal F_{n}}(\rho^{\mathcal{F}_{n+1}}(X))=\rho^{\mathcal F_{n}}(X)$ for all $X\in\mathcal{L}^\infty(\mathcal{F})$ and $0\le n\le N-1$, if $\{\rho^{\mathcal F_{n}}\}_{n=0}^{N}$ is a convex dynamic risk measure (see Proposition 5 of \citealp{DS05}).
%The dynamic consistency is equivalent to tower property (\citealp{DS05}), i.e., $\rho(\rho^{\mathcal{G}}(X))=\rho(X)$. But, in general, the conditional generalized quantiles do not satisfy the dynamic consistency or the supermartingale property.
% \cite{KS09} proved that the conditional entropic risk measure is the unique class of  monetary and law invariant dynamic risk measures that satisfy dynamic consistency.
%Similarly, we can show that the conditional entropic risk measure is the unique class of conditional generalized quantiles that have the supermartingale property.
\begin{theorem}\label{th-timeconsistency}
\begin{enumerate}[(i)]
\item   %{Sequential consistency}.
A dynamic generalized quantile $\{\rho_\alpha^{\mathcal{F}_n}\}_{t\in\mathcal T}$ defined in Definition $\ref{def3-1}$ is sequentially consistent.
 \item %{Dynamic consistency}.
%Suppose that $\rho_\alpha^{\mathcal{F}_0(0)=0$
%Suppose that $u_1, u_2 \in \mathcal U_{icx}^{+}$ are continuously differentiable and strictly convex functions with $u_1^\prime(0)= u_2^\prime(0)= 0$. Then,
A dynamic generalized quantile $\{\rho_\alpha^{\mathcal{F}_n}\}_{t\in\mathcal T}$ is dynamic consistent if and only if it is
a dynamic entropic risk measure defined in Example \ref{entropy}.
%is the unique class of dynamic generalized quantiles that have the dynamic consistency.
\item %{Supermartingale property}.
%Suppose $\rho_\alpha^{\mathcal{F}_0}(0)=0$
A dynamic generalized quantile $\{\rho_\alpha^{\mathcal{F}_n}\}_{t\in\mathcal T}$ has the supermartingale property if and only if it is a
dynamic entropic risk measure defined in Example \ref{entropy} with the parameter $\gamma\ge 0$.

%Dynamic entropic risk measure is the unique class of generalized quantiles that have the supermartingale property.
\end{enumerate}
\end{theorem}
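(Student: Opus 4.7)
My plan leverages the shortfall representation from Theorem~\ref{th-eq}: attach to $(u_1,u_2,\alpha)$ the increasing function $v$ in \eqref{eq-v}, so that $\rho_\alpha^{\mathcal{F}_n}(X)=\rho_v^{\mathcal{F}_n}(X)=\essinf\{Z\in\mathcal{L}^\infty(\mathcal{F}_n):\E[v(X-Z)|\mathcal{F}_n]\leq 0\}$ for every $n$. Each of the three parts then becomes a statement about this $v$.

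For part~(i), working in the direction suggested by the surrounding text (``more information implies the same conclusion with less information''), I would assume $\rho_\alpha^{\mathcal{F}_t}(X)\leq 0$ a.s. A monotone approximation of the essinf shows that $\E[v(X-\rho_\alpha^{\mathcal{F}_t}(X))|\mathcal{F}_t]\leq 0$ a.s.; since $v$ is increasing and $X-\rho_\alpha^{\mathcal{F}_t}(X)\geq X$, this forces $\E[v(X)|\mathcal{F}_t]\leq 0$, so unconditionally $\E[v(X)]\leq 0$. Hence $0$ lies in the set defining $\rho_\alpha^{\mathcal{F}_0}(X)$, giving $\rho_\alpha^{\mathcal{F}_0}(X)\leq 0$. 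The $\geq 0$ branch is obtained symmetrically, e.g.\ by passing to $-X$ and $\tilde v(x)=-v(-x)$.

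For part~(ii), the ``if'' direction is a one-line tower computation for $\rho_{v_\gamma}^{\mathcal{F}_n}(X)=\tfrac{1}{\gamma}\log\E[e^{\gamma X}|\mathcal{F}_n]$: equality of $\rho_{v_\gamma}^{\mathcal{F}_t}$ on $X$ and $Y$ means equality of the conditional moment generating functions, hence of the unconditional ones, whence $\rho_{v_\gamma}^{\mathcal{F}_0}(X)=\rho_{v_\gamma}^{\mathcal{F}_0}(Y)$. For ``only if'' I would invoke \cite{KS09}: among monetary and law-invariant dynamic risk measures, the only dynamic consistent family is the entropic one. Monotonicity and conditional translation invariance are in Proposition~\ref{pro-basic} (so each $\rho_\alpha^{\mathcal{F}_n}$ is monetary), and the required conditional law invariance follows from the representation $\rho_\alpha^{\mathcal{F}_n}(X)(\omega)=\rho_\alpha(F_{X|\mathcal{F}_n}(\cdot,\omega))$ in Theorem~\ref{pro-further}(i); putting these together forces the family to be entropic.

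For part~(iii), ``if'' is Jensen's inequality applied to the concave function $\log$: for $\gamma>0$,
\[\rho_{v_\gamma}^{\mathcal{F}_0}(X)=\tfrac{1}{\gamma}\log\E[\E[e^{\gamma X}|\mathcal{F}_t]]\geq \E\!\left[\tfrac{1}{\gamma}\log\E[e^{\gamma X}|\mathcal{F}_t]\right]=\E[\rho_{v_\gamma}^{\mathcal{F}_t}(X)],\]
and for $\gamma=0$ both sides equal $\E[X]$. The ``only if'' direction is the main obstacle, and I would tackle it in two stages. First, testing the supermartingale inequality on $\mathcal{F}_t$-measurable $X$ (so that $\rho_\alpha^{\mathcal{F}_t}(X)=X$) yields $\rho_\alpha^{\mathcal{F}_0}(X)\geq \E[X]$ for every bounded $X$; by conditional translation invariance this is equivalent to $\E[v(X)]\geq 0$ whenever $\E[X]=0$, and a standard Jensen-type argument on the non-atomic space forces $v$ to be convex. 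Second, I would promote the supermartingale property to dynamic consistency within this class; once that is done, part~(ii) identifies the family as entropic, and Jensen again rules out $\gamma<0$ (the $\gamma<0$ entropic measure satisfies $\rho\leq\E$, contradicting the first stage). The hardest step is this promotion: I would attempt it by combining supermartingale inequalities for $X$ and for auxiliary $Y$ sharing the same $\mathcal{F}_t$-conditional distribution as $X$ (whence $\rho_\alpha^{\mathcal{F}_t}(X)=\rho_\alpha^{\mathcal{F}_t}(Y)$ by Theorem~\ref{pro-further}(i)), and by exploiting the shortfall identity at time $0$ to derive a matching upper bound $\rho_\alpha^{\mathcal{F}_0}(X)\leq\E[\rho_\alpha^{\mathcal{F}_t}(X)]$ that closes the argument.
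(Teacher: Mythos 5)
Your parts (i) and (ii) follow essentially the same route as the paper: (i) is the identical argument (shortfall representation, $\E[v(X)|\mathcal F_t]\le 0$, tower property), and (ii) likewise delegates the converse to the Kupper--Schachermayer characterization after checking that the family is monetary and law invariant; the paper is just as terse as you are about the symmetric ``$\ge 0$'' branch of (i). Your Jensen argument for the ``if'' direction of (iii) is also fine and matches the paper's citation of the supermartingale property of the entropic measure.

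The gap is in the ``only if'' direction of (iii), and it is twofold. First, your Stage 1 conclusion is wrong: testing the supermartingale inequality on $\mathcal F_T$-measurable payoffs gives $\rho_\alpha^{\mathcal F_0}(X)\ge\E[X]$ for all bounded $X$, which via the shortfall identity translates into $\E[X]=0\Rightarrow\E[v(X-)]\ge 0$. This only says that the convex minorant of $v(\cdot-)$ is nonnegative at the origin (for two-point laws it is the star-shapedness condition $v(a)/a\le v(b)/b$ for $a<0<b$); it does not force $v$ to be convex, so the ``standard Jensen-type argument'' does not deliver what you claim. Second, and more seriously, your Stage 2 --- promoting the supermartingale property to dynamic consistency so that part (ii) can be invoked --- is exactly the step you would need to prove, and you only sketch a plan for it; the supermartingale property is strictly weaker than dynamic consistency in general, and the proposed combination of payoffs with matching conditional laws is not an argument. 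The paper avoids this entirely: it shows that the supermartingale property implies \emph{mixture concavity} of the static functional $\rho_\alpha^{\mathcal F_0}$ (by embedding a mixture $\lambda F+(1-\lambda)G$ as a conditional law with respect to an independent Bernoulli $\sigma$-field), notes that every generalized quantile is law invariant, monetary, normalized and has convex level sets (a consequence of the shortfall representation), and then invokes Theorem 4 of \cite{EM21}, which characterizes the entropic risk measures as the only such functionals with mixture concavity and CxLS; the restriction to $\gamma\ge 0$ then comes from the sign of the concavity. If you want to salvage your approach you would need to supply a genuine proof of the promotion step, which is not available; otherwise you should switch to the mixture-concavity/CxLS route.
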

\begin{proof}
(i) %For $\mathcal{F}_n\subseteq \mathcal{F}_2 \subseteq \mathcal{F}$,
By Theorem \ref{th-eq},  $\rho_\alpha^{\mathcal{F}_{t}}(X) \le 0 $ implies $\E[ v(X)|\mathcal{F}_{t}]\le 0$ where $v$ is defined by \eqref{eq-v}.
It holds that $$ \E[ v(X)|\mathcal{F}_0] =\E[\E[ v(X)|\mathcal{F}_{t}]|\mathcal F_0]\le 0,$$
and thus, $\rho_\alpha^{\mathcal{F}_0}(X) \le 0$.
The other inequality follows similarly.

(ii)
%By Proposition 6 of \cite{DS05}, it holds that dynamic entropic risk measure have dynamic consistency. Suppose now a dynamic generalized quantile $\{\rho_\alpha^{\mathcal{F}_n}\}_{n=0}^N$ have dynamic consistency. We first aim to prove that $\rho_\alpha^{\mathcal F_0}(X)=\rho_\alpha^{\mathcal F_0}(\rho_\alpha^{\mathcal F_n}(X))$ for all $0\le n \le N$. By Proposition \ref{pro-basic} (iii) and (iv), $\rho_\alpha^{\mathcal F_n}$ is conditional translation invariant and normalized. Hence, we have $\rho_\alpha^{\mathcal F_{n+1}}(\rho_\alpha^{\mathcal F_{n+1}}(X))=\rho_\alpha^{\mathcal F_{n+1}}(X)$ for all $0\le n\le N-1$. It follows from dynamic consistency that $\rho_\alpha^{\mathcal F_{n}}(\rho_\alpha^{\mathcal F_{n+1}}(X))=\rho_\alpha^{\mathcal F_{n}}(X)$ for all $0\le n\le N-1$. Therefore, we have $\rho_\alpha^{\mathcal F_0}(\rho_\alpha^{\mathcal F_1}(X))=\rho_\alpha^{\mathcal F_0}(X)$, and $\rho_\alpha^{\mathcal F_0}(\rho_\alpha^{\mathcal F_1}(X))=\rho_\alpha^{\mathcal F_0}(X)$
%
%
%\begin{align*}
%\rho_\alpha^{\mathcal F_0}(\rho_\alpha^{\mathcal F_n}(X))
%\end{align*}
By Proposition \ref{pro-basic} (ii) and (iii),
the dynamic generalized quantiles are all monetary.
The law invariance of $\rho_\alpha^{\mathcal F_0}$ is trivial.
%that is, $\rho_{\alpha}^{\mathcal F_0}$ is monotonic, translation invariant and satisfies $\rho_{\alpha}^{\mathcal F_0}(X)=\rho_{\alpha}^{\mathcal F_0}(Y)$ whenever $X,Y$ has the same distribution function.
Hence, the result follows immediately from Theorem 1.10 of \cite{KS09} which states that the dynamic entropic risk measure is the unique class of
monetary and law invariant dynamic risk measures that satisfy dynamic consistency.
%The result follows immediately as the  generalized quantiles are all monetary and law invariant.

(iii) One can verify that conditional entropic risk measure has the supermartingale property (see e.g., Proposition 6 of \citealp{DS05}).

To see necessity,
%first note that the supermartingale property implies the mixture concavity of $\rho_\alpha^{\mathcal F_0}$ (see e.g., \citealp{PR07}).
we first recall the definition of the mixture concavity and convex level set (CxLS). Noting that $\rho_{\alpha}^{\mathcal F_0}$ is law invariant, we denote by $\rho_\alpha^{\mathcal F_0}(F)=\rho_\alpha^{\mathcal F_0}(X)$ with $X\sim F$.
For two distribution functions $F,G$,
%$X,Y\in\mathcal{L}^\infty(\mathcal{F}),~X\sim F,~Y\sim G$, $F$ and $G$ are distribution functions, we denote that $\rho_\alpha(F)=\rho_\alpha(X)$ for any random variable $X\sim F$, then
the \emph{mixture concavity} of unconditional generalized quantile $\rho_\alpha^{\mathcal F_0}$ is shown as $$\rho_\alpha^{\mathcal F_0}(\lambda F+(1-\lambda)G))\ge\lambda\rho_\alpha^{\mathcal F_0}(F)+(1-\lambda)\rho_\alpha^{\mathcal F_0}(G),~~\lambda\in(0,1).$$
We say $\rho_\alpha^{\mathcal F_0}$ has CxLS if and only if $\rho_\alpha^{\mathcal F_0}(F)=\rho_\alpha^{\mathcal F_0}(G)$ implies $\rho_\alpha^{\mathcal F_0}(\lambda F+(1-\lambda G))=\rho_\alpha^{\mathcal F_0}(F)$ for all $\lambda\in(0,1)$.
Note that the supermartingale property implies the mixture concavity of $\rho_\alpha^{\mathcal F_0}$ (see e.g.,
%chapter 3 of
\citealp{PR07}).
 %that is, the mixture concavity of generalized quantile is a necessary condition of the supermartingale property (see e.g. chapter 3 of \citealp{PR07}).
Further, by Theorem 4 of \cite{EM21}, the entropic risk measure  is the unique class of law-invariant,  monetary and normalized risk measures that satisfy mixture concavity and have CxLS.
%under mild continuity.
Since all generalized quantiles are law-invariant, monetary and normalized risk measures and have CxLS (see Proposition \ref{pro-basic} (ii), (iii) and (iv)),
 we have that the desired result holds.
% the conditional entropic risk measure is the unique class of conditional generalized quantiles that satisfy supermartingale property.
\end{proof}

\end{document}